\documentclass[a4paper,onecolumn,11pt]{quantumarticle}
\pdfoutput=1
\usepackage[utf8]{inputenc}
\usepackage[english]{babel}
\usepackage[T1]{fontenc}
\usepackage{amsmath,amssymb,amsfonts,mathtools}
\usepackage{enumerate}
\usepackage{mathtools}
\usepackage{hyperref}

\usepackage[backend=bibtex]{biblatex}
\usepackage{tikz}
\usepackage{lipsum}

\newtheorem{theorem}{Theorem}
\newtheorem{definition}{Definition}
\newtheorem{lemma}{Lemma}
\newtheorem{axiom}{Axiom}

\newenvironment{proof}
  {\par\noindent\textit{Proof.}\enspace} % Formatting for the start (e.g., italicized "Proof.")
  {\hfill$\square$\par\medskip}

\def\H{\ensuremath\mathcal{H}}

\def\C{\ensuremath\mathbb{C}}
\def\R{\ensuremath\mathbb{R}}
\def\Z{\ensuremath\mathbb{Z}}
\def\N{\ensuremath\mathbb{N}}

\def\Q{\ensuremath\mathcal{Q}}

\newcommand{\inner}[2]{\langle{}#1,#2\rangle}

\newcommand{\bra}[1]{\langle{}#1|}
\newcommand{\ket}[1]{|#1\rangle}

\newcommand{\proj}[1]{\ket{#1}\bra{#1}}

\DeclareMathOperator*{\im}{Im}
\DeclareMathOperator*{\tr}{Tr}

\DeclareMathOperator*{\rank}{rank}
\DeclareMathOperator*{\nullspace}{null}
\DeclareMathOperator*{\stab}{Stab}

\newcommand{\mattwo}[4]{
  \begin{bmatrix}
    #1 & #2\\
    #3 & #4
  \end{bmatrix}
}

\begin{document}

\title{Lower Bound for Region of Non-Distillable Magic States}

\author{Akul Sethi}
\affiliation{Department of Mathematics, Northeastern University, Boston, Massachusetts}
\maketitle

\begin{abstract}
  Magic states are a leading proposal for realizing fault-tolerant and universal quantum computation. At the present moment, much of the theory of magic states is not understood. Even simple questions such as what single qubit states have magic remain open. In Campbell and Browne's seminal paper on bound magic states, they prove the existence of states outside of the stabilizer octahedron which cannot be used to achieve universal computation. In this paper we expand upon their proof and determine a lower-bound on this region.  
\end{abstract}

\tableofcontents{}
\newpage
\section{Introduction}

To physically realize useful quantum computation, two critical components must be realized: universal computation and error-correction. The interplay between the two is what leads to the study of quantum magic \cite{BravyiKitaev2005}. Universal computation is the ability of a quantum device to approximate any quantum algorithm up to arbitrary precision while quantum error-correction is collection of algorithms and techniques to make quantum calculation robust to environmental noise.

Within the formalism of quantum mechanics, a quantum system can be represented as a Hilbert space $\mathcal{H}$. In the case of computation, we can generally write $H = H_1 \otimes \dots \otimes H_k$ where each $H_1$ is a 2-dimensional Hilbert space known as a qubit. A highly desirable property of an error-correction scheme is that any operation on the space, known as gates, can be split into independent operations on the subsystems. Such a collection of operations is known as a transversal gate set and allows errors on the qubits to remain uncorrelated. It is then a natural question to ask if there exists a gateset which is both universal and transversal. The Eastin-Knill theorem illustrates that there is not \cite{EastinKnill2009}. The theory of quantum magic was developed to circumvent this. 

It is a well known fact that the Clifford Group along with the $T$ gate forms a universal gate set for quantum computation \cite{NielsenChuang2010}. Moreover, it can be shown that the entire Clifford Group can be implemented transversally \cite{Steane1997}. Combining these, in 2005, Bravyi and Kitaev were able to illustrate that given access to a supply of specific states they coin "magic states", they were able to implement the T gate transversally as well \cite{BravyiKitaev2005}. Moreover, they present an algorithm which takes imperfect copies of magic sates and "distills" them towards true magic states using only Clifford operations. Note that this is not a violation of the Eastin-Knill theorem as that theorem does not assume access to any ancilla states. In order to not violate the theorem, it is necessary that the construction of such states involves non-transversal operations. 

Another natural question now is what states can be distilled towards magic ones. Gottesman and Knill were able to show that there is a region of states, known as stabilizer states (which will be defined precisely later on) which cannot be distilled \cite{Gottesman1998Heisenberg}. Later, Campbell and Browne were able to show that this region extends past the stabilizer states, but do so without defining what this region is \cite{CampbellBrowne2010}. In this project we develop lower bounds for this region. 

\section{Background}

\subsection{Introduction to Quantum Mechanics}
We will begin with a brief axiomatization of quantum mechanics in a way which is useful for quantum information theory \cite{NielsenChuang2010}. 
\begin{definition}[Hilbert Space]
  A Hilbert space $\H$ is a finite dimensional vector space over $\C$ equipped with inner product $\inner{-}{-} : \H \times \H \rightarrow \C$ such that for all $\alpha,\beta \in C$ and $x,y,z \in \H$
  \begin{enumerate}
    \item $\inner{z}{\alpha x + \beta y} = \alpha\inner{z}{x} + \beta\inner{z}{y}$ 
    \item $\inner{x}{y} = \overline{\inner{y}{x}}$ 
    \item $\inner{x}{x} \geq 0$
  \end{enumerate}
\end{definition}
In this paper we will use braket notation. For any $\psi \in \H$ we will define 
\begin{equation}
  \ket{\psi} \coloneqq \psi \quad \text{ and } \quad \bra{\psi} \coloneqq (\phi \mapsto \inner{\psi}{\phi})
\end{equation}
where $\bra{\psi} : \H \rightarrow \C$ is a linear map. 
\begin{axiom}[State Axiom]
  For any physical quantum system we can associate a Hilbert space which can be used to define its state. Precisely, let $\H$ be a Hilbert space. A linear map $\rho : \H \rightarrow \H$ is a state if for all $\psi \in \H$
  \begin{enumerate}
    \item $\tr{(\rho)} = 1$ 
    \item $\rho = \rho^\dagger$
    \item $\bra{\psi}\rho\ket{\psi} \geq 0$
  \end{enumerate}
\end{axiom}
If $\rho = \ket\psi\bra\psi$ for some unit norm $\ket\psi \in \H$ it is known as a pure state. In this case we will just use $\ket{\psi}$ to denote the state. Otherwise, $\rho$ is known as a mixed state. It is easy to show that all mixed states can be written as convex combinations of pure ones. The term state will be used to refer to $\rho$ of either form where the type should be easy to determine from context. We will use the term n-qubit state to refer to a state with $\dim{\H} = 2^n$. Since it is more natural in quantum information theory to think of systems in terms of their number of qubits, we define $\Q_n := \C^{2^n}$ and $\mathcal{S}(\Q_n)$ to be the set of states $\rho: \Q_n \rightarrow \Q_n$. 
\begin{axiom}
  The evolution of a closed quantum system is described by a unitary transformation. That is, the state $\rho$ of the system at time $t_1$ is related to the state $\rho'$ are time $t_2$ by unitary $U$ which only depends on $t_1$ and $t_2$,
  \begin{equation}
    \rho' = U\rho U'
  \end{equation}
\end{axiom}
\begin{axiom}
  Quantum measurements are described by a collection $\{M_m\}$ where $M_m : \H \rightarrow \H$ are linear maps satisfying
  \begin{equation}
    \sum_m M_{m}^\dagger M_m = I
  \end{equation}
  The index $m$ refers to the possible outcomes of the measurement. The probability that measurement outcome $m$ takes place is given by 
  \begin{equation}
    p(m) = \tr{(M_{m}^\dagger M_m \rho)}
  \end{equation}
  and the state after measurement is
  \begin{equation}
    \frac{M_m \rho M_{m}^\dagger}{\tr{(M_{m}^\dagger M_m \rho)}}
  \end{equation}
\end{axiom}
\begin{axiom}
  The state space of a composite physical system is the tensor product of component physical systems. Moreover given $n$ systems where the ith one is in state $\rho_i$, then the joint state of the total system is $\rho_1 \otimes \rho_2 \otimes \dots \otimes \rho_n$.
\end{axiom}

\subsection{Pauli, Clifford, and Stabilizer Groups}
There are several important groups which are central in studying Quantum Information Theory. In this section we define them and mention useful properties which are used throughout.
\begin{definition}[Pauli Group]
  We will define the following 
  \begin{equation*}
    X = \mattwo{0}{1}{1}{0}, \quad Y = \mattwo{0}{i}{-i}{0}, \quad Z = \mattwo{1}{0}{0}{-1}
  \end{equation*}
  The Pauli group $P_n \leq \text{GL}_{2^n}(\C)$ is a subgroup where for all $P \in P_n$ we can write
  \begin{equation}
    P = \eta P_1 \otimes \dots \otimes P_n 
  \end{equation}
  for some $\eta \in \{\pm 1 \pm i\}$ and $P_i \in \{I,X,Y,Z\}$.
\end{definition}
We will refer to $\eta$ as the phase of an operator in $P_n$. Sometimes, we only want to consider phase-less operators and so we will use $\overline{P_n} \leq P_n$ to be the subgroup with $\eta = 1$. Note that each of $X,Y,Z$ are traceless, hermitian, unitary, and involutionary. Since they show up repeatedly, we will define $R := \{X,Y,Z\}$ and use $R_\pm$ to denote the same set with phases $\pm$. Moreover, all elements of $P_n$ either commute or anti-commute. Together with $I$, $R$ forms an orthogonal basis for the set of $2 \times 2$ hermitian matrices. It is common to denote the $+1$ eigenstates of $X,Y,Z$ as $\ket{+},\ket{+i},\ket{0}$ respectively, and $\ket{-},\ket{-i},\ket{1}$ for the $-1$ eigenstates. 
\begin{definition}[Clifford Group]
  The Clifford group $\mathcal{C}_n$ is defined 
  \begin{equation}
    \mathcal{C}_n \coloneqq \{V \in U(2^n) : VP_nV^\dagger = P_n\} / U(1)
  \end{equation}
  where $U(n)$ is the unitary group of $n \times n$ complex matrices. 
\end{definition}
Finally, we will define a special class of group which is used throughout quantum error correction. 
\begin{definition}[Stabilizer Group]
Let $S \leq P_n$ be an Abelian subgroup. If $-I \notin S$ we say $S$ is a stabilizer subgroup. The simultaneous $+1$ eigenspace of $S$ is referred to as its stabilized subspace and denoted $\stab{(S)}$. 
\end{definition}
Notice that all stabilizer subgroups cannot have any elements with phase $\pm i$. Since all phase-less Pauli's are involutionary, by closure this would imply that the stabilizer group contains $-I$ which is a contradiction. Throughout this paper, for any stabilizer subgroup $S$, we will use $\Pi_S$ to denote the orthogonal projection onto $\stab{(S)}$.
\begin{definition}[Rank]
Let $S \leq P_n$ be a stabilizer group generated by some set $G = \{s_1, \dots, s_k\}$. If $\langle G \setminus \{s_i\}\rangle \neq S$ for all $s_i \in G$ we call $G$ an independent set of generators and call $k$ the rank of $S$.
\end{definition}
\begin{definition}[Stabilizer Normalizer]
  Let $S \leq P_n$ be a stabilizer subgroup. We define 
  \begin{equation}
  N(S) := \{P \in P_n : Ps = sP,\ \forall s \in S\}
  \end{equation}
\end{definition}
It would appear that we have defined the centralizer of $S$ rather than its normalizer but it turns out that for the Pauli group, they are the same. 
\begin{definition}[Stabilizer State]
  $\ket\psi \in \Q_n$ is a pure stabilizer state if for some rank $n$ stabilizer subgroup $S \leq P_n$, $\ket\psi \in \stab{(S)}$. 
\end{definition}
Like with states, convex combinations of pure stabilizer states are referred to as mixed stabilizer states while the term stabilizer state refers to a state of either form. In our paper we will use the following standard result. 

\begin{lemma}
  Let $S \leq P_n$ be some rank $k$ stabilizer subgroup. Then, $N(S) / S \cong P_{n - k}$
  \label{lemma:normalizer}
\end{lemma}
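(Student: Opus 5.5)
The plan is to use the binary symplectic representation of the Pauli group and then conjugate $S$ into a standard form with a Clifford, so that the quotient can be read off directly. One caveat first: $N(S)$ contains the phases $\{\pm 1,\pm i\}$ while $S$ contains only $I$ among them. So "$\cong P_{n-k}$" should be read up to phases, in the sense that $N(S)/S$ modulo its centre of phases is isomorphic to $P_{n-k}$ modulo phases. I will prove that, and the version with phases kept follows from the same normal form.

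\textbf{Step 1 (symplectic picture).} Map $P_n/\langle iI\rangle \to \mathbb{F}_2^{2n}$ by sending $\eta X^{a}Z^{b}$ to $(a,b)$. This map is a group homomorphism. Two Paulis commute exactly when the symplectic form $\omega((a,b),(a',b')) = a\cdot b' + b\cdot a'$ vanishes. Since $-I\notin S$, the map is injective on $S$. Independence of the $k$ generators then means their images span a $k$-dimensional isotropic subspace $V$.

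\textbf{Step 2 (normal form).} Extend a basis $v_1,\dots,v_k$ of $V$ to a symplectic basis $v_1,\dots,v_n,w_1,\dots,w_n$ of $\mathbb{F}_2^{2n}$; this is the symplectic Gram--Schmidt argument. Every symplectic matrix is realized by a Clifford $U$ up to signs. Fixing signs with extra Pauli conjugations, we may assume $USU^\dagger = \langle Z_1,\dots,Z_k\rangle$. Conjugation by $U$ is an automorphism of $P_n$ that carries $N(S)$ to $N(USU^\dagger)$, so it suffices to treat $S_0=\langle Z_1,\dots,Z_k\rangle$.

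\textbf{Step 3 (compute).} A Pauli $\eta P_1\otimes\cdots\otimes P_n$ commutes with every $Z_j$, $j\le k$, exactly when $P_j\in\{I,Z\}$ for $j\le k$. Hence
\[
N(S_0) = \{\eta\, Z^{c_1}\otimes\cdots\otimes Z^{c_k}\otimes Q : Q\in \overline{P_{n-k}}\}.
\]
Define $\phi: N(S_0)\to P_{n-k}$ by $\eta Z^{c}\otimes Q\mapsto \eta Q$. This is a surjective homomorphism, because the first $k$ factors multiply independently of the rest. Its kernel is $\{Z^{c}\otimes I\}$ with $\eta=1$, which is exactly $S_0$. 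The first isomorphism theorem then gives $N(S_0)/S_0\cong P_{n-k}$, and transporting back through $U$ gives the lemma. The same computation verifies the remark in the paper that normalizer equals centralizer: $PsP^\dagger=\pm s$, and $-s\notin S$ for $s\in S$.

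\textbf{Main obstacle.} The substantive step is Step 2. It needs two facts: that every isotropic subspace extends to a symplectic basis, and that every symplectic transformation lifts to a Clifford, for instance via the generators $H$, $S$, $\mathrm{CNOT}$. It also needs the sign adjustment. If a Clifford-free argument is preferred, I would instead count. Since $\omega$ is nondegenerate, $N(S)$ corresponds to $V^\perp$, of dimension $2n-k$. Choosing a symplectic complement inside $V^\perp$ of dimension $2(n-k)$ splits $V^\perp = V\oplus W$ with $W$ symplectic, and the Paulis over $W$ realize $P_{n-k}$. Either route hinges on the same symplectic linear algebra.
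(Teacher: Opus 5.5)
Your proof is correct and follows essentially the route the paper points to. The paper gives no proof of its own and defers to a binary-symplectic-representation argument in the cited reference, which your Steps 1--3 carry out via the Clifford normal form $S_0=\langle Z_1,\dots,Z_k\rangle$. One correction: your opening caveat about phases is unnecessary, because your Step 3 map $\phi$ is onto all of $P_{n-k}$, phases included, with kernel exactly $S_0$, so you have proved the lemma as literally stated.
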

 A proof of this fact which uses the binary symplectic representation of the Pauli group is included in \cite{surviving_quantum_world}. More standard results on stabilizer groups, including the symplectic representation, are included in Appendix \ref{app:stab}. For the remainder of this paper we will assume that the reader has familiarized themselves with these facts.

\subsection{Bloch Sphere Representation}

We had mentioned earlier that the Pauli matrices along with $I$ gives an orthogonal basis for the space of $2 \times 2$ complex Hermitian matrices. It turns out that we can use this fact to get a nice geometric representation of single-qubit states. Moreover, this representation also gives us a simpler way to determine if a state is stabilizer. 

\begin{definition}[Bloch Sphere Representation, B]
  For any single qubit state $\rho$ there is a unique $B(\rho) \in \R^3$ with $||B(\rho)||_2 \leq 1$ such that  
  \begin{equation}
    \rho = \frac{I + B(\rho)_XX + B(\rho)_YY + B(\rho)_ZZ}{2}
  \end{equation}
  We call $B(\rho)$ the Bloch sphere representation of $\rho$. 
  \label{def:bloch}
\end{definition}
 Using this representation we have the following correspondences $\ket{+} \mapsto (1,0,0), \ket{-} \mapsto (-1,0,0), \ket{+i} \mapsto (0,1,0), \ket{-i} \mapsto (0,-1,0), \ket{0} \mapsto (0,0,1), \ket{1} \mapsto (0,0,-1)$ using the labelling $(x,y,z) \in \R^3$. Since $B$ respects convex combinations, the set of single qubit stabilizer states is the convex hull of these six vertices. Viewed geometrically, this forms an octahedron within the Bloch sphere. Therefore, a state is stabilizer if and only if the L1 norm of its Bloch vector is less than or equal to 1. Since our bounds will be expressed in terms of this quantity we give it a name. 
\begin{definition}[L1 Bloch Norm, F]
  Let $B(\rho)$ be the Bloch sphere representation of some single qubit state $\rho$. Then the L1 Bloch norm is defined,
  \begin{equation}
    F(\rho) = |B(\rho)_X| + |B(\rho)_Y| + |B(\rho)_Z|
  \end{equation}
\end{definition}
Since Clifford operators simply permute the Pauli matrices, the L1 Bloch norm of a state is preserved under Clifford operations. Note that we will drop the argument of $B$ and $F$ when the input state is obvious. 

\subsection{Stabilizer Codes}
The central tool used in proving the main result relies on the stabilizer formalism. Historically, this was developed in Gottesman's Ph.D. thesis to produce quantum error correcting codes, but in this paper we will be utilizing the formalism in a slightly different way. For this reason, some of the definitions may feel unmotivated, but this should not be an issue in understanding the results. Let $W \subseteq \Q_n$ be some 2 dimensional subspace. In the context of error-correction, this is called a codespace. 
\begin{definition}[Encoding and Decoding]
  An encoding map for $W$ is any isometry $V : \Q_1 \rightarrow \Q_n$ whose image is $W$. An encoding $V$ is said to be Clifford if there exists some $C \in \mathcal{C}_n$ such that 
  \begin{equation}
    V\ket\psi = C(\ket\psi \otimes \ket{0^{n - 1}})
  \end{equation}
  In this case we call $C^{-1} = C^\dagger$ a Clifford decoding for $V$.
\end{definition}

All encodings and decodings from here on will be considered Clifford, but we mention standard encodings for completeness. Note that if $C \in \mathcal{C}_n$ is a decoding for some encoding of $W$ then we call it a decoding for $S$. We are especially interested in the case where $S$ is the stabilized subspace of some rank $n - 1$ stabilizer subgroup. In this case, it actually turns out that the existence of Clifford encodings is guaranteed. To show this, we will use the following standard result without proof.

\begin{lemma}
  Let $\{s_1, \dots, s_n\}$ and $\{s_{1}', \dots, s_{n}'\}$ be any two sets of independent n-qubit Pauli operators, generating stabilizer groups $S$ and $S'$ respectively. Then there exists $C \in \mathcal{C}_n$ such that 
  \begin{equation}
    Cs_iC^\dagger = s_{i}'
  \end{equation}
  for all $i \in [n]$
  \label{lemma:cliff}
\end{lemma}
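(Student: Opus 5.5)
The plan is the standard symplectic argument. We need a Clifford $C$ that conjugates each $s_i$ to $s_i'$. Since conjugation is a group automorphism of $P_n$, it suffices to build $C$ on a generating set of $P_n$. Strategy: extend each independent commuting set $\{s_1,\dots,s_n\}$ to a full symplectic basis $\{s_1,\dots,s_n,t_1,\dots,t_n\}$ of $\overline{P_n}$ modulo phases. This means $t_j$ anticommutes with $s_j$, and every other pair of basis elements commutes. Do the same for $\{s_i'\}$, producing $\{t_i'\}$. Then define $C$ by requiring $Cs_iC^\dagger=s_i'$ and $Ct_iC^\dagger=t_i'$.

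The first step is to pass to the binary symplectic representation from Appendix \ref{app:stab}. Write $P_n/\langle iI\rangle\cong\mathbb{F}_2^{2n}$ with the symplectic form $\omega$, where commutation corresponds to $\omega=0$. Independence of the $s_i$ means their images are linearly independent over $\mathbb{F}_2$; here we use that $-I\notin S$, so no nontrivial product is $\pm I$. Commutativity means their images span a Lagrangian subspace $L$.

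Next, extend to a symplectic basis by induction on $j$ (symplectic Gram--Schmidt). Nondegeneracy of $\omega$ gives a vector $t_1$ with $\omega(s_1,t_1)=1$ and $\omega(s_i,t_1)=0$ for $i>1$. This exists because the functionals $\omega(s_i,\cdot)$ are linearly independent. We then restrict to the symplectic complement of $\mathrm{span}(s_1,t_1)$ and repeat; the fact that $L$ is Lagrangian keeps the later $s_i$ in that complement. Doing the same for the $s_i'$ gives a linear map $M:s_i\mapsto s_i'$, $t_i\mapsto t_i'$ that preserves $\omega$, so $M\in Sp(2n,\mathbb{F}_2)$.

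The main obstacle is lifting $M$ to an actual Clifford unitary with the \emph{correct signs}. Every symplectic matrix is realized by some Clifford, for example via the standard fact that $Sp(2n,\mathbb{F}_2)$ is generated by the images of $H$, $S$ and CNOT, or via Lemma \ref{lemma:normalizer}-style structure results. Such a Clifford $C_0$, however, only satisfies $C_0s_iC_0^\dagger=\pm s_i'$, and similarly for the $t_i$ up to phase. To fix the signs, compose with a Pauli $Q$. Conjugation by $Q$ multiplies $s_i'$ by $(-1)^{\omega(q,s_i')}$. Because $\{s_i',t_i'\}$ is a basis and $\omega$ is nondegenerate, we can choose $q$ to produce any prescribed sign pattern on the $s_i'$; for instance, take $q$ to be a combination of the $t_i'$, which flips exactly the chosen $s_i'$.

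Finally, set $C=QC_0$. Then $Cs_iC^\dagger=s_i'$ exactly for all $i\in[n]$, and $C$ is Clifford because it maps $P_n$ into $P_n$. The $t_i$ are free auxiliary choices, so their signs do not matter.
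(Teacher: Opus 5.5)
The paper gives no proof of this lemma; it is quoted as a standard result. There is therefore no argument of the paper's to compare against. Your proposal is the standard symplectic proof, and it is correct.

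The reduction steps check out:
\begin{itemize}
\item Independence together with $-I\notin S$ makes the images in $\mathbb{F}_2^{2n}$ linearly independent.
\item The symplectic Gram--Schmidt step is right: the later $s_i$ lie in $\mathrm{span}(s_1,t_1)^{\perp}$ because they commute with $s_1$ and by the choice of $t_1$.
\item Taking $q$ to be the sum of those $t_i'$ for which $C_0s_iC_0^\dagger=-s_i'$ fixes the signs.
\end{itemize}

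Two small points. First, $C_0s_iC_0^\dagger$ can only be $\pm s_i'$, rather than $\pm i\,s_i'$, because both $s_i$ and $s_i'$ are Hermitian; you should say this. Second, the real content of the lemma is the surjectivity of $\mathcal{C}_n\to Sp(2n,\mathbb{F}_2)$. Citing generation by the images of $H$, the phase gate and CNOT is a legitimate source for that. Your alternative ``via Lemma \ref{lemma:normalizer}-style structure results'' is not, since $N(S)/S\cong P_{n-k}$ says nothing about realizing a symplectic map by a unitary.

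If you want a self-contained proof, the symplectic basis you built already lets you bypass both the lifting theorem and the sign correction. Take Hermitian representatives of the $t_i$ and $t_i'$. Let $\ket{\psi}$ and $\ket{\psi'}$ be the unique states stabilized by $S$ and $S'$ (Appendix \ref{app:stab} with $k=0$). Define $C\prod_i t_i^{x_i}\ket{\psi}=\prod_i (t_i')^{x_i}\ket{\psi'}$ for $x\in\{0,1\}^n$. Both families are orthonormal bases, being joint eigenvectors of the $s_i$ (resp.\ $s_i'$) with eigenvalue pattern $(-1)^{x_i}$, so $C$ is unitary. Checking on these bases gives $Cs_iC^\dagger=s_i'$ and $Ct_iC^\dagger=t_i'$ exactly. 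Since the $s_i,t_i$ together with $iI$ generate $P_n$, $C$ is Clifford.

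Finally, the paper actually applies this lemma in Lemma \ref{lemma:encoding_exists} with only $n-1$ generators. Your Gram--Schmidt argument handles an independent commuting set of any size $k\le n$ verbatim, once you complete it by an arbitrary symplectic basis of the $2(n-k)$-dimensional complement. So your route also proves the version the paper really needs.
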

Using this, we can prove the following:

\begin{lemma}
  If $W = \stab{(S)}$ for some stabilizer subgroup $S \leq P_n$, then $W$ admits a Clifford encoding.
  \label{lemma:encoding_exists}
\end{lemma}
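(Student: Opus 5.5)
The idea is to use Lemma~\ref{lemma:cliff} to carry $S$ onto the stabilizer group of $\ket{0^{n-1}}$ on the last $n-1$ qubits. This Clifford then maps $\stab{(S)}$ onto $\Q_1 \otimes \ket{0^{n-1}}$. Here I take $S$ to have rank $n-1$, since this is the case the paper has in mind and $W$ must be $2$-dimensional. Choose independent generators $s_2, \dots, s_n$ of $S$.

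\emph{Step 1: extend the generators.} The first step is to extend $\{s_2,\dots,s_n\}$ to an independent commuting set $\{s_1, s_2, \dots, s_n\}$ generating a rank-$n$ stabilizer group. By Lemma~\ref{lemma:normalizer}, $N(S)/S \cong P_1$. Pick $s_1 \in N(S)$ whose image in the quotient corresponds to $Z$ in $P_1$; I will take it Hermitian, with phase $\pm 1$.
\begin{itemize}
  \item Then $s_1$ commutes with every element of $S$.
  \item Since its image is a nontrivial element of the quotient, $s_1 \notin S$, so adding it keeps the generating set independent.
  \item We also need $-I \notin \langle S, s_1\rangle$. Elements of this group have the form $s$ or $s s_1$ with $s \in S$. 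If $s s_1 = -I$, then $s_1 = -s^{-1}$, whose image in $N(S)/S$ is the image of $-I$, a phase rather than $Z$. This is a contradiction.
\end{itemize}
This extension step is the main obstacle, since it requires reading off $s_1$ from Lemma~\ref{lemma:normalizer} carefully, including the phase bookkeeping. Alternatively, one can argue in the symplectic picture of Appendix~\ref{app:stab}: pick a vector in the symplectic complement of the span of the $s_i$ that lies outside that span, then fix its sign.

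\emph{Step 2: apply Lemma~\ref{lemma:cliff}.} Set $s_i' = Z_i$, meaning $Z$ acting on qubit $i$ and identity elsewhere, for $i \in [n]$. Lemma~\ref{lemma:cliff} gives $D \in \mathcal{C}_n$ with $D s_i D^\dagger = Z_i$ for all $i$. Let $C = D^\dagger$, which is also Clifford, so that $C Z_i C^\dagger = s_i$.

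\emph{Step 3: verify the encoding.} Define $V\ket\psi = C(\ket\psi \otimes \ket{0^{n-1}})$. This is an isometry, being the composition of an isometry into $\Q_n$ with a unitary. For $i \geq 2$,
\begin{equation*}
  s_i V\ket\psi = C Z_i (\ket\psi\otimes\ket{0^{n-1}}) = V\ket\psi,
\end{equation*}
so $\im V \subseteq \stab{(S)}$. It remains to show $\dim \stab{(S)} = 2$.
\begin{itemize}
  \item The projector is $\Pi_S = \prod_{i\ge2} \frac{I+s_i}{2} = C\left(\prod_{i\ge 2}\frac{I+Z_i}{2}\right)C^\dagger$.
  \item Its trace is therefore $2^n/2^{n-1} = 2$.
\end{itemize}
So $\im V$ is a $2$-dimensional subspace of a $2$-dimensional space, hence $\im V = \stab{(S)} = W$, and $V$ is a Clifford encoding with Clifford decoding $C^\dagger$.
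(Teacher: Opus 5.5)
Your proof is correct and follows the same route as the paper: use Lemma~\ref{lemma:cliff} to find a Clifford $C$ carrying the $Z_i$ on qubits $2,\dots,n$ to generators of $S$, then set $V\ket\psi = C(\ket\psi\otimes\ket{0^{n-1}})$. Your Step~1 completes the generators to $n$ independent commuting ones, so Lemma~\ref{lemma:cliff} is applied exactly as stated rather than to $n-1$ operators as in the paper, and your trace computation shows $\im V = W$; both fill in details the paper's proof leaves implicit.
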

\begin{proof}
  Suppose that $W = \stab{(S)}$ for some rank $n - 1$ stabilizer subgroup $S$. Define the map $\ket\psi \mapsto \ket\psi \otimes \ket{0^{n - k}}$ which takes a single qubit state $\ket\psi \in \Q_1$ to an n-qubit state. The image of this map is a stabilizer code $S_0$ stabilized by generators $\{Z_2, \dots, Z_n\}$. Let $\{g_1, \dots, g_{n - 1}\}$ be independent generators of $S$. By Lemma \ref{lemma:cliff}, there exists $C \in \mathcal{C}_n$ such that $CZ_iC^\dagger = g_i$. Then, 
  \begin{equation}
    V\ket\psi := C(\ket\psi \otimes \ket{0^{n - 1}})
  \end{equation}
  is a Clifford encoding map $V : \Q_1 \rightarrow \Q_n$ whose image is $W$. 
\end{proof}

Since the codespace is simply a copy of $\Q_1$, we are interested in the operators in $P_n$ which mimic the action of $P_1$ on the codespace. These are known as the \textit{logical operators} of the code. Precisely, the set of logical operators is equivalent to $N(S)$. Similarly, we are interested in the $\textit{logical states}$ of the code with respect to some logical operator. If we let $\overline{Z} \in N(S) \setminus \pm S$ then there are unique $\ket{\overline{0}},\ket{\overline{1}} \in \stab(S)$ such that
\begin{equation}
  \overline{Z}\ket{\overline{0}} = \ket{\overline{0}}, \qquad  \overline{Z}\ket{\overline{1}} = -\ket{\overline{1}}
\end{equation}

\begin{lemma}
  Let $S \leq P_n$ be a rank $n - 1$ stabilizer subgroup with a choice of Z-logical $\bar{Z} \in N(S) \setminus \pm S$. Then there exists a Clifford encoding $V: \Q_1 \rightarrow \Q_n$ such that 
  \begin{equation}
    V\ket{z} = \ket{\bar{z}}
  \end{equation}
  for all $z \in \{0,1\}$. 
  \label{lemma:encoding_logical}
\end{lemma}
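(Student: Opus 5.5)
We reduce to Lemma \ref{lemma:cliff} applied to a full rank-$n$ generating set that includes $\bar{Z}$. Let $\{g_1,\dots,g_{n-1}\}$ be independent generators of $S$. Since $\bar{Z} \in N(S)\setminus \pm S$, the operator $\bar{Z}$ commutes with every $g_i$. We also have $\pm\bar{Z}\notin S$ and $\bar Z^2 = I$; the phase of $\bar Z$ must be real for $\bar Z$ to have $\pm 1$ eigenvalues, which is implicit in the definition of $\ket{\bar 0},\ket{\bar 1}$. Hence $S' := \langle g_1,\dots,g_{n-1},\bar{Z}\rangle$ is abelian. It does not contain $-I$: otherwise $-I = s\bar{Z}$ for some $s\in S$, which would give $\bar{Z} = -s \in \pm S$. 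Independence of the new set follows from the same observation, since $\bar Z \notin S$. So $S'$ is a rank $n$ stabilizer subgroup with independent generators $\{g_1,\dots,g_{n-1},\bar{Z}\}$.

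Next, take the reference set $\{Z_1, Z_2,\dots,Z_n\}$. It generates the stabilizer group of $\ket{0^n}$. By Lemma \ref{lemma:cliff}, there is $C\in\mathcal{C}_n$ with $CZ_{i+1}C^\dagger = g_i$ for $i\in[n-1]$ and $CZ_1C^\dagger = \bar{Z}$; the lemma only requires the two lists to be matched in some order. As in Lemma \ref{lemma:encoding_exists}, define $V\ket\psi := C(\ket\psi\otimes\ket{0^{n-1}})$. This is an isometry. Its image lies in $\stab(S)$, because $g_i C(\ket\psi\otimes\ket{0^{n-1}}) = C Z_{i+1}(\ket\psi\otimes\ket{0^{n-1}}) = C(\ket\psi\otimes\ket{0^{n-1}})$. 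Counting dimensions, $\stab(S)$ has dimension $2^{n-(n-1)}=2$, so the image equals $\stab(S)$. Thus $V$ is a Clifford encoding of $W=\stab(S)$.

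Finally, check the logical states. We have $\bar{Z} V\ket{0} = CZ_1(\ket{0}\otimes\ket{0^{n-1}}) = V\ket 0$, and similarly $\bar{Z}V\ket1 = -V\ket1$. Both vectors lie in $\stab(S)$. By the uniqueness of $\ket{\bar 0},\ket{\bar 1}$ asserted just before the lemma, $V\ket z = \ket{\bar z}$. Strictly speaking, this uniqueness holds only up to a global phase, since the $\pm1$ eigenspaces of $\bar Z$ within the 2-dimensional codespace are one-dimensional. If the paper fixes representatives, I would absorb the phase by composing $C$ with a diagonal Clifford on the first qubit. A Pauli or phase gate only gives relative phases of $\pm1,\pm i$, so in general I would simply interpret $\ket{\bar z}$ up to global phase, which is physically irrelevant.

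The main obstacle is the first step: verifying that adjoining $\bar{Z}$ keeps the group a stabilizer group with an independent generating set, so that Lemma \ref{lemma:cliff} applies. The key fact used there is $\bar{Z}\notin\pm S$. The phase issue in the last step is a secondary subtlety, and I would handle it by convention.
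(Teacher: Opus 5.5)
Your proof is correct and is essentially the paper's argument: both reduce to Lemma~\ref{lemma:cliff} applied to the rank-$n$ stabilizer group generated by $S$ and $\bar{Z}$. The difference is that you do it in one step (sending $Z_1\mapsto\bar{Z}$ and $Z_{i+1}\mapsto g_i$), while the paper first takes an encoding $V_0$ from Lemma~\ref{lemma:encoding_exists} and then corrects its logical operator with a second Clifford fixing each $s_i$; your explicit checks (that $\{g_1,\dots,g_{n-1},\bar{Z}\}$ is independent with $-I\notin\langle S,\bar{Z}\rangle$, that the image is exactly $\stab(S)$, and that $\ket{\bar z}$ is determined only up to phase) fill in steps the paper leaves implicit.
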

\begin{proof}
  Let $V_0 : \Q_1 \rightarrow \Q_n$ be a Clifford encoding which exists by lemma \ref{lemma:encoding_exists}. This encoding defines its own Z-logical $\bar{Z}'$. Specifically, 
  \begin{equation}
    \bar{Z}' = V_0ZV_{0}^\dagger
  \end{equation}
  Let $\{s_1,\dots,s_{n - 1}\}$ be an independent set of stabilizer generators for $S$. Since $\{s_1, \dots, s_{n - 1}, \bar{Z}'\}$ and $\{s_1,\dots,s_{n - 1},\bar{Z}\}$ are independent generators of stabilizer groups by lemma \ref{lemma:encoding_exists} there exists $C$ which maps between them such that
  \begin{equation}
    C\bar{Z}'C^\dagger = \bar{Z}
  \end{equation}
  If we define $V = CV_0$ it then follows that $V$ is a Clifford encoding which satisfies 
  \begin{equation}
    V\ket{z} = \ket{\bar{z}}
  \end{equation}
\end{proof}

\subsection{Reductions}

In magic state distillation, we are interested in protocols which have access to a restricted set of operations rather then the full set allowable under the axioms of quantum mechanics. In particular we are interested in a class of protocols we call Clifford reductions \cite{campbell_aux}. 

\begin{definition}[Clifford Reductions]
  An n-to-1 Clifford reduction takes an n copies of a state $\rho$ and outputs a single qubit $\rho'$ using ideal Clifford unitaries, preparation of stabilizer states, classical feedforward, classical randomness, Pauli measurements, and postselection. 
\end{definition}

Here, classical feedforward refers to the ability of the device to change the course of the protocol based on outcomes of measurements. Similarly, postselection is the ability to only include outcomes of the protocol which meet certain measurement criterion. In this paper we determine a region of non-stabilizer states which cannot be exited using Clifford reductions. Since Clifford reductions have many degrees of freedom, it is simpler to analyze a more restricted class of reduction. 

\begin{definition}[Stabilizer Reduction]
  Let $\rho$ be a single qubit state and let $S \leq P_n$ be some stabilizer subgroup with $n - 1$ independent generators. Let $D: \Q_n \rightarrow \Q_n$ be a Clifford decoding for $S$. An n-to-1 stabilizer reduction performs the following: (i) Take $\rho^{\otimes n}$; (ii) measure the $n - 1$ independent generators of $S$; (iii) repeat steps (i) and (ii) until we measure "$+1$" on all generators; (iv) return $\rho'$ where $D\rho''D^\dagger = \rho' \otimes \proj{0^{n - 1}}$ and $\rho''$ is the post-measurement state. 
\end{definition}

Observe that step (iv) is well defined since the post-measurement state is a convex combination of states in $\stab{(S)}$. While at first stabilizer reductions may seem far less general then their Clifford counterparts, it turns out that a state is distillable via Clifford reduction only if it is distillable via stabilizer reduction \cite{campbell_aux}.  

\section{Main Result}

We are now ready to state our main theorem. 

\begin{theorem}
  For all single-qubit states $\rho$ with $B(\rho) > 0$ and for all n-to-1 stabilizer reductions which output $\rho'$, if
  \begin{equation}
    1 < F(\rho) < \frac{6^n2^{3n - 2} + m^n}{6^n2^{3n - 2} - m^n}
  \end{equation}
  then $\rho'$ is a convex combination of a stabilizer state and $C\rho C^\dagger$ for some $C \in \mathcal{C}_1$. Here, $m = \min\{B_X,B_Y,B_Z\} / F$. 
  \label{thm:stab}
\end{theorem}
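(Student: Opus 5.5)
Following Campbell and Browne, I will decompose $\rho$ and expand $\rho^{\otimes n}$ in the Pauli basis, then track what the stabilizer reduction does to each term. Write $B = (B_X,B_Y,B_Z)$ with all components positive, and set $\rho = (1-\epsilon)\,\sigma + \epsilon\, \tau$. Here $\tau$ is the "face" magic state with Bloch vector $(1,1,1)/\sqrt3$ (or more conveniently the point where the ray through $B$ exits the Bloch sphere), and $\sigma$ is the stabilizer state on the octahedron face $x+y+z=1$ in the direction of $B$.

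A cleaner route is to work directly with the Pauli expansion
\begin{equation*}
\rho^{\otimes n} = 2^{-n}\sum_{P \in \overline{P_n}} \Big(\prod_i B_{P_i}\Big) P, \qquad B_I := 1.
\end{equation*}
After measuring the generators of $S$ and postselecting on all $+1$ outcomes, the unnormalized output is $\Pi_S \rho^{\otimes n}\Pi_S$. Only terms with $P \in N(S)$ survive, since anticommuting terms vanish. Applying the decoding $D$ and using Lemma \ref{lemma:normalizer}, each surviving $P$ maps to $\pm$ a single logical Pauli $Q\in\{I,X,Y,Z\}$ tensored with stabilizer data. Collecting terms, the output has Bloch vector
\begin{equation*}
B'_Q = \frac{\sum_{P\in \pm Q S}\pm \prod_i B_{P_i}}{\sum_{P\in S}\pm\prod_i B_{P_i}}.
\end{equation*}
The goal is to show that $F(\rho')$ is controlled by these coset weight enumerators.

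The key steps, in order, are as follows.

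(1) Write $B = F\,\hat b$, where $\hat b$ lies on the face $x+y+z=1$ and every component of $\hat b$ is at least $m$.

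(2) Show that the output is a convex combination of a stabilizer state and a Clifford image of $\rho$. It suffices to show $F(\rho')\le 1$ in the "bad" region, or more precisely that $B'$ lies in the convex hull of the octahedron and the 24 Clifford images of $B$. To handle this, I expand each product $\prod_i B_{P_i} = F^{|P|}\prod_i \hat b_{P_i}$, where $|P|$ is the weight. Since $F$ is only slightly above $1$, I write $F = 1+\delta$. Then $B'$ is approximately what one gets from the stabilizer state with vector $\hat b$, which is stabilizer and hence maps to a stabilizer output, plus corrections of order $\delta$.

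(3) Bound the denominator from below. The identity term contributes $1$, and the remaining terms must be controlled. For this I use the same expansion applied to $\hat b$: the face state $\hat b$ is a mixture of $\ket{0},\ket{+},\ket{+i}$ with weights $\ge m$, so every stabilizer outcome occurs with probability at least $m^n$ relative to the total.

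(4) Obtain the counting factor. The number of Pauli strings is $4^n$, the six stabilizer states give a factor $6^n$, and $2^{3n-2}$ counts signed terms. Comparing the worst case, where the magic contributions $F^{n}-1$ accumulate over at most $6^n2^{3n-2}$ terms, against the guaranteed weight $m^n$ yields the condition
\begin{equation*}
\frac{F-1}{F+1}<\frac{m^n}{6^n2^{3n-2}}.
\end{equation*}
This rearranges to the stated bound. Concretely, I will decompose $\rho = \frac{F+1}{2}\hat\rho - \frac{F-1}{2}\hat\rho^\perp$, with $\hat\rho$ having Bloch vector $\hat b$ and $\hat\rho^\perp$ having Bloch vector $-\hat b$. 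This affine decomposition is exactly where the factors $(F\pm1)/2$ arise.

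The main obstacle is step (4). Expanding $\rho^{\otimes n}$ with the negative affine weight produces a signed mixture of tensor products of stabilizer states. I must show that the negative part, whose total weight is at most $\big((F+1)/2\big)^n - \ldots$ times the number of terms, can be absorbed by the positive stabilizer part, which has weight at least $m^n$ after postselection, so that the output remains in the convex hull. I would first try a crude triangle-inequality bound over all $6^n2^{3n-2}$ terms. I expect the exact constant to come from counting how many stabilizer output states each input string can map to.
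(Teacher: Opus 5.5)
Your decomposition ends up being the paper's. Writing $\rho=\frac{F+1}{2}\hat\rho-\frac{F-1}{2}\hat\rho^{\perp}$ with $\hat\rho=\hat b_X\proj{+}+\hat b_Y\proj{+i}+\hat b_Z\proj{0}$ is exactly the six-term expansion with $A_{\pm L}=\frac{1\pm F}{2}\hat b_L$. Your target $\frac{F-1}{F+1}<\frac{m^n}{6^n2^{3n-2}}$ is also equivalent to the stated bound.

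\textbf{The missing idea.} What you lack is the step that makes the absorption of negative terms work. Knowing that $\hat\rho^{\otimes n}$ yields a stabilizer output, and that each surviving positive term has some guaranteed weight, is not enough. Suppose all surviving positive terms projected onto logical states on one face of the logical octahedron, say $\ket{\bar +},\ket{\bar{+i}},\ket{\bar 0}$. Then the unperturbed output sits on the boundary, and any negative weight on, e.g., $\ket{\bar -}$ pushes $F(\rho')$ strictly above $1$, however small $\delta=F-1$ is.

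So an ``order-$\delta$ perturbation of a stabilizer output'' argument needs the unperturbed output to lie a definite distance inside the octahedron. Whether that holds depends on $S$: if $S$ acts trivially on some qubit, the output is $C\rho C^\dagger$ with $F>1$. Your proposal never uses the hypothesis that $S$ acts nontrivially on every qubit, or any substitute for it. It therefore cannot separate these cases, and it never explains where the $C\rho C^\dagger$ alternative in the statement comes from.

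The paper first disposes of codes that miss a qubit. It then invokes the Campbell--Browne lemma (Lemma \ref{lemma:campbell}): there exist $g,g'\in R^n$ whose projections $\Pi_S\ket{\psi_g}$ and $\Pi_S\ket{\psi_{g'}}$ are the \emph{antipodal} logical states $\ket{\bar 0}$ and $\ket{\bar 1}$. Positive weight on an antipodal pair adds to the trace but cancels in the Bloch vector. The resulting slack $2\min\{q_Z+p_Z,\,q_{-Z}+p_{-Z}\}$ is exactly what absorbs the negative $p_L$ in the triangle-inequality step, via Lemma \ref{lemma:sum}.

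\textbf{The constants.} Your constants are reverse-engineered rather than derived. The $6^n$ bounds the number of terms with negative coefficient, each of magnitude at most $\left(\frac{1+F}{2}\right)^{n-1}\frac{F-1}{2}$. The $2^{3n-2}$ is not a count of signed terms; it is the bound of Lemma \ref{lemma:bound}, which says a \emph{nonzero} projection probability satisfies $\bra{\psi_g}\Pi_S\ket{\psi_g}\ge 2^{-(3n-2)}$. This holds because the computational-basis amplitudes of $\Pi_S\ket{\psi_g}$ are, up to the normalization of $\ket{\psi_g}$, Gaussian integers divided by $2^{n-1}2^{n/2}$. That factor turns ``$\ket{\bar 0}$ and $\ket{\bar 1}$ each receive a positive term'' into the quantitative bound $q_{\pm Z}\ge\left(\frac{1+F}{2}\right)^n\frac{m^n}{2^{3n-2}}$.

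Your step (3) claim, that each outcome keeps weight at least $m^n$ after postselection, omits this factor and is false in general, because $\bra{\psi_g}\Pi_S\ket{\psi_g}$ varies with $g$. For example, take $S=\langle Z_1Z_2\rangle$ and $\hat b=(\frac13,\frac13,\frac13)$. The term $\ket{+}\otimes\ket{+}$ retains relative weight $\frac{1}{10}<m^2=\frac19$ after postselection.
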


Note that here we are defining $B(\rho) > 0$ to be comparison component wise. Using \cite{campbell_aux} we can write down a more general theorem.
\begin{theorem}
  For all single-qubit states $\rho$ and for all n-to-1 Clifford reductions which output $\rho'$, if
  \begin{equation}
    1 < F(\rho) < \frac{6^n2^{3n - 2} + m^n}{6^n2^{3n - 2} - m^n}
  \end{equation}
  then $\rho'$ is a convex combination of a stabilizer state and $C\rho C^\dagger$ for some $C \in \mathcal{C}_1$. Here $m = \min\{|B_X|,|B_Y|,|B_Z|\}/F$. 
\end{theorem}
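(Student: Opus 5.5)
The plan is to deduce this theorem from Theorem~\ref{thm:stab} together with the reduction result of \cite{campbell_aux}. There are three steps: normalize the sign pattern of $B(\rho)$ by a single-qubit Clifford, pass from Clifford reductions to stabilizer reductions, and transport the conclusion back.

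\textbf{Step 1 (sign normalization).} Given an arbitrary $\rho$, I will find $C_0 \in \mathcal{C}_1$ such that $\sigma := C_0 \rho C_0^\dagger$ has $B(\sigma)$ with nonnegative components. The single-qubit Cliffords act on the Bloch vector as signed permutations. Conjugation by a Pauli flips the signs of two components, and the Hadamard or phase gate swaps two axes while flipping one sign. Composing these should realize every sign pattern. The degenerate case is a vanishing component: then $m=0$, the upper bound equals $1$, and the hypothesis is vacuous. So I may assume $B(\sigma) > 0$ componentwise. Since Cliffords only permute the Paulis up to sign, $F(\sigma) = F(\rho)$ and the multiset $\{|B_X|,|B_Y|,|B_Z|\}$ is unchanged. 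Hence $m(\sigma) = \min\{B(\sigma)_X,B(\sigma)_Y,B(\sigma)_Z\}/F(\sigma)$ equals $m(\rho)$, and the hypothesis transfers verbatim.

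\textbf{Step 2 (reduction to stabilizer reductions).} Any $n$-to-$1$ Clifford reduction on $\rho^{\otimes n}$ is the same as a Clifford reduction on $\sigma^{\otimes n}$, obtained by prepending the transversal Clifford $(C_0^\dagger)^{\otimes n}$. By \cite{campbell_aux}, its output $\rho'$ is a convex combination of outputs of stabilizer reductions on $\sigma^{\otimes n}$, possibly post-processed by Clifford operations, possibly with stabilizer states mixed in, and possibly on fewer inputs. This is the step I expect to be the main obstacle. The paper only quotes the qualitative statement ``distillable via Clifford reduction only if via stabilizer reduction,'' so I must extract the structural version. I will argue via the standard decomposition: Pauli measurements, postselection, and feedforward branches each amount to projecting onto some stabilizer code followed by a Clifford decoding; discarded qubits and classical randomness produce convex mixtures; ancilla stabilizer states only shrink the effective code.

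Fewer inputs $k \le n$ are harmless. The map $k \mapsto \frac{6^k2^{3k-2}+m^k}{6^k2^{3k-2}-m^k}$ is decreasing, because $m^k/(6^k 2^{3k-2})$ decreases in $k$ when $m \le 1/3$. So the hypothesis at $n$ implies it at each $k \le n$.

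\textbf{Step 3 (conclusion).} Apply Theorem~\ref{thm:stab} to each branch. Each stabilizer-reduction output has the form $p\,\tau + (1-p)\,C\sigma C^\dagger$ with $\tau$ a stabilizer state and $C \in \mathcal{C}_1$. Any subsequent Clifford post-processing $C'$ preserves this form: stabilizer states map to stabilizer states, and $C'C\sigma C^\dagger C'^\dagger = (C'CC_0)\rho(C'CC_0)^\dagger$.

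Finally I take the convex mixture over branches. There is a subtlety here, since different branches may carry different Cliffords $C$. I will handle it in one of two ways. The first is to show that for each branch the mixture is actually with a single $C$ by Clifford symmetry of the argument in Theorem~\ref{thm:stab}. The second is to interpret the conclusion as a convex combination of a stabilizer state and states $C\rho C^\dagger$ with varying $C$. I would check the first, and fall back on the second reading otherwise.
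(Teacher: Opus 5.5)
Your route is the same as the paper's, which derives this theorem in one line from Theorem~\ref{thm:stab}. It cites \cite{campbell_aux} to pass from Clifford to stabilizer reductions and uses Clifford rotations to move $B(\rho)$ into the positive octant. Your additional checks are correct and fill in what the paper leaves implicit. A vanishing component gives $m=0$ and hence an empty hypothesis. The bound decreases in $k$, so reductions acting on fewer copies are covered. Clifford post-processing preserves the conclusion.

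The point to fix is the subtlety you raise in Step 3, and it resolves against your first option. Each stabilizer-reduction branch already comes with a single $C$; the trouble comes from mixing branches. With a \emph{single} $C$, the statement is false for Clifford reductions. Take $B(\rho)=\lambda(0.5,0.3,0.2)$ with $\lambda>1$ below the bound; this is possible for every $n$, since $m=0.2>0$ makes the bound exceed $1$. Let $K\in\mathcal{C}_1$ cycle $X\to Y\to Z\to X$. Discarding $n-1$ copies and applying $I$, $K$, $K^2$ with probability $1/3$ each gives $B(\rho')=\tfrac{F}{3}(1,1,1)$. Let $\ell(x)=x_1+x_2+x_3$. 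Any $p\tau+(1-p)C\rho C^\dagger$ with $\tau$ stabilizer satisfies $\ell\le p+(1-p)F$. This equals $\ell(B(\rho'))=F>1$ only if $p=0$ and $B(C\rho C^\dagger)=B(\rho')$. That is impossible, because $B(C\rho C^\dagger)$ is a signed permutation of $B(\rho)$ and the entries of $B(\rho)$ are not all equal.

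So what is provable is your second reading, $\rho'\in\mathrm{conv}\bigl(\text{stabilizer states}\cup\{C\rho C^\dagger : C\in\mathcal{C}_1\}\bigr)$. That is also all the paper's argument delivers. State the theorem that way rather than leaving it as a fallback.
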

We are able to drop the positive-octant requirement on $B(\rho)$ since if a state cannot be improved in the positive octant then all of its Clifford rotates into the other octants cannot be improved as well. The proof of Theorem \ref{thm:stab} will rely on expressing $\rho$ in terms of a basis of pure stabilizer states and then utilizing the structure of stabilizer subgroups. Before we begin its proof we will prove several useful lemmas.

\subsection{Useful Lemmas}

Let $g \in R^{n}_\pm$ be a tuple of Pauli matrices. We will define 
\begin{equation}
  \mathbf{g_i} := \underbrace{I \otimes \dots \otimes I}_{i - 1\text{ times}} \otimes g_i \otimes \underbrace{I \otimes \dots \otimes I}_{n - i \text{ times}}
\end{equation}
Using this we define $\mathcal{G}_g$ to be the group generated by $\langle \mathbf{g_1}, \dots, \mathbf{g_n} \rangle$. Finally, $\ket{\psi_g}$ will denote the unique state stabilized by $\mathcal{G}_g$. 

\begin{lemma}
  Let $\ket\psi \in \Q_n$ be a stabilizer state with stabilizer group $G$. Let $S \leq P_n$ be a rank $n - 1$ stabilizer subgroup. If $\Pi_S\ket\psi \neq 0$, then there exists $g \in G$, such that the set $\{g, s_1, \dots, s_{n - 1}\}$ is independent and stabilizes $\Pi_S\ket\psi$. 
  \label{lemma:logical}
\end{lemma}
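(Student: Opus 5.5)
The plan is to use the standard stabilizer-measurement picture in the symplectic representation of Appendix \ref{app:stab}. Write $G = \langle h_1, \dots, h_n\rangle$ with $h_i$ independent, and let $s_1,\dots,s_{n-1}$ be independent generators of $S$. The first step is a general observation: if a Pauli $s$ anticommutes with some element of $G$, then $\bra\psi s\ket\psi = 0$. Applying this to $\Pi_S = 2^{-(n-1)}\sum_{s\in S} s$, we get
\begin{equation}
  \bra\psi\Pi_S\ket\psi = 2^{-(n-1)}\sum_{s \in S} \bra\psi s\ket\psi .
\end{equation}
Here only the $s \in S$ commuting with all of $G$ contribute. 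Since $G$ is maximal abelian with $-I\notin G$, these are exactly the $s\in S\cap \pm G$. Moreover, if $-s\in G$ for some $s \in S$, then the terms pair up, and a standard check of the measurement rules gives $\Pi_S\ket\psi=0$. So the hypothesis $\Pi_S\ket\psi\neq 0$ forces every $s\in S$ that commutes with all of $G$ to lie in $G$ itself.

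The second step is the key structural fact. Let $A = S\cap G$ and pick a complement in $G$, i.e.\ write $G = A\cdot G'$. We show that $\Pi_S\ket\psi$ is stabilized by $S$ together with $N(S)\cap G$. Indeed, any $g \in G$ commuting with all of $S$ commutes with $\Pi_S$, so
\begin{equation}
  g\Pi_S\ket\psi = \Pi_S g\ket\psi = \Pi_S\ket\psi .
\end{equation}
Of course $S$ also stabilizes $\Pi_S\ket\psi$. It therefore remains to find $g\in N(S)\cap G$ with $g\notin \pm S$; then $\{g,s_1,\dots,s_{n-1}\}$ is independent, because $g$ is not in $\pm S$ and the $s_i$ are independent. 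Since $\Pi_S\ket\psi\ne 0$, the group generated by $S$ and $g$ cannot contain $-I$ (it stabilizes a nonzero vector), so it is a genuine stabilizer group.

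The existence of such a $g$ is a counting argument in $\mathbb{F}_2^{2n}$. Consider the commutation map $\phi: G \to \mathbb{F}_2^{n-1}$, $g\mapsto(\langle g,s_i\rangle)_i$, where $\langle\cdot,\cdot\rangle$ is the symplectic form. Its kernel is $N(S)\cap G$, which has dimension at least $n-(n-1)=1$ over $\mathbb{F}_2$ modulo phases. If $\ker\phi\not\subseteq \pm S$, we are done. Otherwise $\ker\phi\subseteq \pm S\cap G = A$ (using Step 1 to drop the $-$ sign), and we need a sharper count. Let $a=\dim A$. The map $\phi$ vanishes on $A$, so the image of $\phi$ is the induced image of $G/A$, which has dimension $n-a$. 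But these values are constrained: every $g\in G$ commutes with $A\le S$, so the image lies in the annihilator of the $a$-dimensional subspace of $S$ corresponding to $A$, inside the pairing between $\mathrm{span}(s_i)$ and its commutation functionals. Hence the image has dimension at most $(n-1)-a$. Then rank--nullity gives $\dim\ker\phi \ge n-(n-1-a)=a+1$, which is strictly larger than $\dim A$. So there is a $g\in\ker\phi\setminus A$. Such a $g$ is also not in $\pm S$: if $g \in \pm S$, Step 1 would force $g \in S$, hence $g \in A$, a contradiction.

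The main obstacle is making this last dimension bound rigorous. The claim that the image of $\phi$ lies in an $(n-1-a)$-dimensional subspace needs care: one should choose the generators $s_i$ adapted to $A$, namely $s_1,\dots,s_a$ generating $A$. Then the first $a$ coordinates of $\phi$ vanish identically, and the bound becomes immediate. With this choice of generators (always possible by extending a basis of $A$ to a basis of $S$), the whole argument is elementary linear algebra over $\mathbb{F}_2$. The only quantum input is Step 1, which is used to exclude the sign ambiguity $-s\in G$.
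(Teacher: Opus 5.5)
Your proof is correct, and it takes a genuinely different route from the paper.

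The paper argues iteratively, in the style of the Gottesman--Knill measurement update. It applies $(I+s_1), (I+s_2), \dots$ one at a time. At each step it replaces the current generators $g_1,\dots,g_k$ that anticommute with $s_i$ by $s_i, g_1g_2,\dots,g_1g_k$ and re-checks independence. After $n-1$ rounds, one generator built from $G$ survives alongside $s_1,\dots,s_{n-1}$.

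You instead argue globally. Every element of $G\cap N(S)$ commutes with $\Pi_S$, so it stabilizes $\Pi_S\ket\psi$ automatically. The problem then reduces to showing $G\cap N(S)\not\subseteq \pm S$. Your rank--nullity count for the commutation map $\phi$, with generators of $S$ adapted to $A=S\cap G$, settles this via $\dim(G\cap N(S))\ge a+1>\dim A$. The hypothesis $\Pi_S\ket\psi\neq 0$ enters only to exclude $-s\in G$.

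What each approach buys:
\begin{itemize}
\item The paper's version is constructive and writes $g$ explicitly as a product of the original generators. The cost is bookkeeping over $n-1$ rounds, and the nonvanishing of the intermediate vectors $\prod_{j\le i}(I+s_j)\ket\psi$ is used only implicitly, for example to get $s_1\in G$ rather than $-s_1\in G$ in the commuting case.
\item Your version is shorter and basis-free, and it shows exactly where nonvanishing is used. It also gives $g\in N(S)$ by construction, which is the property the proof of Theorem~\ref{thm:stab} uses next when it treats $G_g$ as a logical operator.
\end{itemize}

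Some cosmetic clean-ups:
\begin{itemize}
\item The case split ``if $\ker\phi\not\subseteq\pm S$ we are done'' is unnecessary. Once the generators are adapted to $A$, $\dim\ker\phi\ge a+1$ holds unconditionally.
\item The image of $\phi$ has dimension \emph{at most} $n-a$, not exactly $n-a$.
\item The expansion of $\bra\psi\Pi_S\ket\psi$ in Step 1 can be dropped. If $-s\in G$ then $(I+s)\ket\psi=0$, so $\Pi_S\ket\psi=\tfrac12\Pi_S(I+s)\ket\psi=0$ directly.
\end{itemize}
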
 
\begin{proof}
  Let $\{s_1, \dots, s_{n - 1}\}$ and $\{g_1,\dots, g_n\}$ independent generating sets for $S$ and $G$. Then,
  \begin{equation}
    \Pi_S = \frac{1}{2^{n - 1}}\prod_{i \in [n - 1]}(I + s_i)
  \end{equation}
  We will first show that $(I + s_1)\ket\psi$ is stabilized by the independent set $\{s_1, g_{1}', \dots , g_{n - 1}'\}$ for some $g_{i}' \in G$. Consider the case where $s_1$ commutes with all $g_i$. Then, for all $g \in G$, $sg = gs$ and so 
  \begin{equation}
    g(I + s_1)\ket{\psi} = (g + gs_1)\ket{\psi} = g\ket{\psi} + s_1g\ket{\psi} = \ket{\psi} + s_1\ket{\psi} = (I + s_1)\ket{\psi}
  \end{equation}
  and so $(I + s_1)\ket\psi$ is stabilized by all elements of $G$. Observe that, $(I + s_1)\ket{\psi}$ is in the $+1$ eigenspace of $s_1$ and so $s_1$ is also a stabilizer. Since the stabilizer group of an n-qubit state has at most $2^n$ elements and $|G| = 2^n$, $s_1 \in G$. Since we can always pick a set of $n$ generators for $G$ which includes $s$, we get the desired result. This follows by using the fact that a set of Pauli operators is independent iff their binary symplectic representations are independent, and so we can just use the basis extension theorem for vector spaces. \\
  \\
  Now consider the case where $s_1$ anti-commutes with $k > 0$ original generators. We can assume without loss of generality that $s_1$ anti-commutes with $g_1, \dots, g_k$. Observe that, $g_1g_i$ is a stabilizer for $(I + s_1)\ket{\psi}$ for all $1 < i \leq k$. Precisely,
  \begin{equation}
    g_1g_i(I + s_1)\ket{\psi} = g_1g_i\ket{\psi} + g_1g_is_1\ket{\psi} = \ket{\psi} + s_1g_1g_i\ket{\psi} = (I + s_1)\ket{\psi}
  \end{equation}
  Therefore, we can define the new set of generators $G'$ to be 
\begin{equation}
  G' = \{s_1\} \cup (g_i : k < i \leq n) \cup \{g_1g_i : 1 < i \leq k\}
\end{equation}
All that remains is to show that $G'$ is independent. Take $H \subset G'$. Assume for contradiction that 
\begin{equation}
  I = \prod_{g \in H} g'
\end{equation}
If $s_1 \in H$ then multiply both sides by $s_1$ and we immediately get a contradiction since we have found a subset of $G$ which multiplies to $s_1$ and $s_1 \notin G$. If $s_1 \notin H$ we can cancel out all duplicate terms on the RHS, and so we get a subset of independent generators of $G$ which multiply to $I$ which is again a contradiction.  \\

To get the actual result we want, we repeat this process with $s_2, \dots, s_{n - 1}$. Note that we will never remove an $s_i$ once we add it because all $s_i$ commute. 
\end{proof}

\begin{lemma}
  Let $g \in R^{n}$ and let $S \leq P_n$. If $\Pi_S\ket{\psi_g} \neq 0$ then $\bra{\psi_g}\Pi_S\ket{\psi_g} \geq \frac{1}{2^{3n - 2}}$. 
  \label{lemma:bound}
\end{lemma}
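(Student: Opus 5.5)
The plan is to compute $\bra{\psi_g}\Pi_S\ket{\psi_g}$ directly through the expansion of $\Pi_S$ as an average over the elements of $S$. Here $S$ is a rank $n-1$ stabilizer subgroup, as in the stabilizer reductions; for general rank $k$ the same argument gives $2^{-k}\ge 2^{-n}$. We have $\Pi_S = \frac{1}{|S|}\sum_{s\in S} s$ with $|S| = 2^{n-1}$, so
\begin{equation*}
\bra{\psi_g}\Pi_S\ket{\psi_g} = \frac{1}{2^{n-1}}\sum_{s\in S}\bra{\psi_g}s\ket{\psi_g}.
\end{equation*}
The key fact I will use is that for a pure stabilizer state with stabilizer group $\mathcal{G}_g$ and any Pauli $P$ with real phase, $\bra{\psi_g}P\ket{\psi_g}\in\{0,\pm1\}$. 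It equals $\pm1$ exactly when $\pm P\in\mathcal{G}_g$, and $0$ otherwise. For the zero case, $P$ anticommutes with some $\mathbf{g_i}$, so $\bra{\psi_g}P\ket{\psi_g}=\bra{\psi_g}P\mathbf{g_i}\ket{\psi_g}=-\bra{\psi_g}\mathbf{g_i}P\ket{\psi_g}=-\bra{\psi_g}P\ket{\psi_g}$. If $P$ commutes with all of $\mathcal{G}_g$, then since $\mathcal{G}_g$ is maximal, $\pm P\in\mathcal{G}_g$.

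The sum therefore becomes $\frac{1}{2^{n-1}}\sum_{s\in S\cap\pm\mathcal{G}_g}\epsilon(s)$, where $\epsilon(s)=\pm1$ according to whether $s$ or $-s$ lies in $\mathcal{G}_g$. The set $T=\{s\in S: \pm s\in\mathcal{G}_g\}$ is a subgroup of $S$, and $\epsilon: T\to\{\pm1\}$ is a homomorphism: if $s=\epsilon(s)h$ and $t=\epsilon(t)h'$ with $h,h'\in\mathcal{G}_g$, then $st=\epsilon(s)\epsilon(t)hh'$. By character orthogonality the sum is $|T|$ if $\epsilon$ is trivial and $0$ otherwise.

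The hypothesis $\Pi_S\ket{\psi_g}\neq 0$ forces the nonzero case. Otherwise the expectation would vanish, and since $\Pi_S$ is a projector, $\|\Pi_S\ket{\psi_g}\|^2=0$. Hence $\bra{\psi_g}\Pi_S\ket{\psi_g}=|T|/2^{n-1}\ge 2^{-(n-1)}$, because $|T|\ge1$ as $I\in T$. This is much stronger than the claimed bound, since $2^{-(n-1)}\ge 2^{-(3n-2)}$ for $n\ge1$, and the lemma follows. Lemma \ref{lemma:logical} offers an alternative route: $\Pi_S\ket{\psi_g}$ is a stabilizer state, and one can compare it to $\ket{\psi_g}$ via overlaps of stabilizer states. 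The direct route above avoids that.

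The main thing to be careful about is the phase bookkeeping. Elements of $S$ carry phases $\pm1$ (never $\pm i$, as noted after the definition of stabilizer groups), and the $g_i\in R$ are phase-free. I must check that $\epsilon$ is well defined, which holds because $-I\notin\mathcal{G}_g$, so $s$ and $-s$ cannot both lie in $\mathcal{G}_g$. The other point to verify is the zero-expectation fact for Paulis outside $N(\mathcal{G}_g)$. If the intended statement allows $S$ of arbitrary rank, the same argument gives the bound $2^{-\mathrm{rank}(S)}\ge 2^{-n}$, which still exceeds $2^{-(3n-2)}$.
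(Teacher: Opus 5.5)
Your proof is correct, and it takes a genuinely different route from the paper's. The paper uses a discreteness argument. After rescaling $\ket{\psi_g}$ so that its computational-basis amplitudes lie in $2^{-n/2}\{0,\pm1,\pm i\}$, it observes that each $s\in S$ sends computational basis vectors to computational basis vectors up to phases $\pm1,\pm i$. Hence $\Pi_S=2^{-(n-1)}\sum_{s\in S}s$ maps the vector into the lattice $2^{-(n-1)}2^{-n/2}\,\mathbb{Z}[i]^{2^n}$, and a nonzero lattice vector has squared norm at least $2^{-(3n-2)}$. That argument is purely computational and never uses how $S$ sits relative to $\mathcal{G}_g$, which is why it loses a factor of $2^{2n-1}$. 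Its constant is also tied to $|S|=2^{n-1}$; for rank $n$ it would give only $2^{-3n}$.

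Your character-sum argument instead computes the overlap exactly, $\bra{\psi_g}\Pi_S\ket{\psi_g}\in\{0,|T|/|S|\}$. This buys three things. It gives the sharp bound $2^{-\rank S}$, attained for example by $S=\langle X\otimes X\rangle$ and $\ket{\psi_g}=\ket{00}$. It applies to any stabilizer state and any rank. As a by-product it recovers the appendix criterion that $\Pi_S\ket{\psi}=0$ iff $-s$ lies in the stabilizer group of $\ket{\psi}$ for some $s\in S$, which is precisely nontriviality of your $\epsilon$. The cost is only the small amount of stabilizer bookkeeping you flagged, which for the product group $\mathcal{G}_g$ can be checked qubit by qubit.

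Since this lemma enters the main proof only through the lower bound on $q_g'$, your constant would propagate. In the proof of Lemma \ref{lemma:sum} one can replace $2^{3n-2}$ by $2^{n-1}$, which enlarges the range in Theorem \ref{thm:stab} to $F<\frac{6^n2^{n-1}+m^n}{6^n2^{n-1}-m^n}$.
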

\begin{proof}
  $\ket{\psi_g}$ is some tensor product of $\ket{+},\ket{+i}$ and $\ket{0}$ and so we can write
\begin{equation}
  \left(\frac{1}{\sqrt{2}}\right)^n \sum_{z \in \{0,1\}^n}c_z\ket{z} = k\ket{\psi_g} \coloneqq \ket{\psi}
\end{equation}
for some $c_z \in \{\pm 1, \pm i, 0\}$ and $0 < k \leq 1$. Since $\Pi_S = \frac{1}{2^{n - 1}}\sum_{s \in S}s$, 
\begin{equation}
  P\ket{\psi} = \frac{1}{2^{n - 1}}\left(\frac{1}{\sqrt{2}}\right)^n \sum_{z \in \{0,1\}^n}\left(l_z + p_zi\right)\ket{z}
\end{equation}
for $l_z,p_z \in \Z$. Since $\Pi_S\ket{\psi} = \Pi_Sk\ket{\psi_g} \neq 0$, at least one $l_z + p_zi \neq 0$. Finally
\begin{equation}
  \bra{\psi}\Pi_S\ket{\psi} = \frac{1}{2^{3n - 2}}\sum_{z \in \{0,1\}^n} l^{2}_z + p^{2}_z \geq \frac{1}{2^{3n - 2}}
\end{equation}
Observe, $\bra{\psi}\Pi_S\ket{\psi} = k^2\bra{\psi_g}\Pi_S\ket{\psi_g}  \geq \frac{1}{2^{3n - 2}} \implies \bra{\psi_g}\Pi_S\ket{\psi_g} \geq \frac{1}{2^{3n - 2}}$
\end{proof}

\begin{lemma}
  Let $S \leq P_n$ be a stabilizer subgroup such that for all $i \in [n]$ there exists $s \in S$ with the ith tensor factor of $s$ not equal to the identity. Then, there exists $g,g' \in  R^{n}$ such that the following hold:
  \begin{enumerate}[(i)]
    \item There exists $\overline{Z} \in \mathcal{G}_g$ and $s \in S$ such that $\overline{Z} \in N(S)$ and $-s\overline{Z} \in \mathcal{G}_{g'}$.
    \item $\Pi_S\ket{\psi_g} \neq 0$ and $\Pi_S\ket{\psi_{g'}} \neq 0$. 
  \end{enumerate}
  \label{lemma:campbell}
\end{lemma}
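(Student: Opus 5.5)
The plan is to build $g$ from a well-chosen logical operator and then obtain $g'$ by multiplying that logical by a stabilizer. Throughout I will use the following standard characterization, which follows from the projector formula $\Pi_S = \frac{1}{2^{n-1}}\sum_{s\in S}s$ used in Lemma \ref{lemma:bound}:
\[
\Pi_S\ket{\psi_h} = 0 \iff \text{there is } s \in S \text{ with } -s \in \mathcal{G}_h .
\]
This holds because $\ket{\psi_h}$ is a stabilizer state with group $\mathcal{G}_h$. For the direction ``$\Leftarrow$'', if $-s\in\mathcal{G}_h$ then $\Pi_S\ket{\psi_h} = \Pi_S s\ket{\psi_h} = -\Pi_S\ket{\psi_h}$. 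For the direction ``$\Rightarrow$'', $\bra{\psi_h}\Pi_S\ket{\psi_h} = \frac{1}{2^{n-1}}\sum_{s}\bra{\psi_h}s\ket{\psi_h}$, and each term is $0$ or $\pm1$. The terms that are nonzero are indexed by the subgroup $A_h := S\cap \pm\mathcal{G}_h$, and the character $s\mapsto\bra{\psi_h}s\ket{\psi_h}$ on $A_h$ sums to zero exactly when it is nontrivial. Hence condition (ii) reduces to a sign-consistency statement about the subgroup $A_h$ of elements of $S$ that are, up to sign, tensor products of $I$ and $h_i$ on each site.

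For (i), use Lemma \ref{lemma:normalizer}: $N(S)/S\cong P_1$, so there is a logical $L\in N(S)\setminus\pm S$. Write $L=\pm P_1\otimes\cdots\otimes P_n$, flipping the sign if necessary, since $-L$ is also a logical. Set $g_i=P_i$ whenever $P_i\neq I$, and leave $g_i\in R$ free on the remaining sites. Then $\overline{Z}:=L$ lies in $\mathcal{G}_g\cap N(S)$.

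Next, pick $s\in S$ and write $-s\overline{Z}=\eta\,Q_1\otimes\cdots\otimes Q_n$. Because $s$ and $\overline{Z}$ commute, the product is Hermitian and $\eta=\pm1$. Define $g'_i=Q_i$ where $Q_i\neq I$, and leave $g'_i$ free elsewhere. This puts $-s\overline{Z}$ in $\mathcal{G}_{g'}$ provided $\eta=+1$. So the choice of $s$ must be made so that this sign is $+$, and I will use the freedom of $s$ ranging over $S$ to arrange it.

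Here the hypothesis that every qubit is acted on nontrivially by some element of $S$ enters. It supplies stabilizers whose local factors differ from those of $\overline{Z}$ on any given site. Multiplying by such a stabilizer changes the local Pauli products, for instance $XZ\propto Y$, and each such change contributes a factor $\pm i$ per site. Tracking these factors, I expect to show that either $s$ or $s s_0$, for a suitable $s_0\in S$, yields $\eta=+1$.

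The main obstacle, which I would tackle last, is (ii) for both $g$ and $g'$ simultaneously. By the characterization above, I need every element of $A_g$ and of $A_{g'}$ to appear in $S$ with a $+$ sign relative to the unsigned products of the $g_i$, respectively $g'_i$. The approach has three parts:
\begin{itemize}
\item First, $\overline{Z}$ itself cannot cause a conflict, since $\overline{Z}\notin\pm S$.
\item Second, the only problematic elements are those $s\in S$ whose support pattern uses exactly the letters $g_i$. I would exploit the free sites, where $L$ acts as identity, together with the freedom of replacing $L$ by $Ls$ for $s\in S$. The aim is to choose the free letters so that $A_g$ is as small as possible, ideally $A_g\subseteq\{I\}$.
\item Third, this works because, on a site where $L$ is trivial, some stabilizer acts nontrivially by hypothesis. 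Choosing $g_i$ to anticommute with that stabilizer's factor removes it from $\pm\mathcal{G}_g$.
\end{itemize}
The same argument then applies to $g'$. Its logical $-s\overline{Z}$ acts on the codespace as $-\overline{Z}$, so $\Pi_S\ket{\psi_{g'}}$, if nonzero, is proportional to the logical state $\ket{\overline{1}}$, and $\Pi_S\ket{\psi_g}$ to $\ket{\overline{0}}$. I expect that the careful case analysis, eliminating sign conflicts in $A_g$ and $A_{g'}$ while keeping the phase $\eta=+1$ from the previous step, is where most of the work lies. If direct elimination fails, the fallback is an averaging argument. Summing $\Pi_S$ against the $\ket{\psi_h}$ over the admissible choices of free letters, one would show that not all choices can give zero overlap.
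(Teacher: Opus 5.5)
The paper gives no proof of this lemma; it defers to Campbell and Browne. Your proposal is not yet a proof either. The two steps that carry all the content are left as expectations, and the first one fails as you formulate it. You fix an arbitrary logical $L$ and then plan to choose $s\in S$ so that $-s\overline{Z}$ has sign $+1$. Take $n=2$, $S=\{I\otimes I,\ Z\otimes Z\}$ (which has full support) and $L=Z\otimes I$. Then $-s\overline{Z}\in\{-Z\otimes I,\ -I\otimes Z\}$ for every $s$, so $\eta=-1$ always. Replacing $L$ by $Ls$ changes nothing, because the coset $LS$ stays the same. The lemma holds here only because you may change the logical \emph{class}: $\overline{Z}=X\otimes X$ and $s=Z\otimes Z$ give $-s\overline{Z}=Y\otimes Y$. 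So the real claim is that some logical class has $+$-signed representatives in both $\overline{Z}S$ and $-\overline{Z}S$. ``Tracking the factors $\pm i$'' gives no mechanism for this, and this is exactly where the full-support hypothesis has to be used.

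For (ii), the free letters are not the obstacle. Weight each free site by the full-rank state $\frac{1}{3}(\proj{+}+\proj{+i}+\proj{0})$. Then use $\tr(\Pi_S\Pi_H)=\frac{2^n}{|S||H|}\bigl(|S\cap H|-|S\cap(-H)|\bigr)$, where $H$ is generated by the single-site factors of a $+$-signed Pauli $P=P_1\otimes\cdots\otimes P_n$. This shows that some completion $g$ of $P$ has $\Pi_S\ket{\psi_g}\neq 0$ if and only if $S$ contains no element $-h_1\otimes\cdots\otimes h_n$ with $h_i\in\{I,P_i\}$.

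That condition involves only the fixed part, so your third bullet cannot repair a violation. For example, take $S=\langle -Z\otimes Z\otimes I,\ I\otimes Z\otimes Z\rangle$ and $\overline{Z}=Z\otimes Z\otimes Z$. The only compatible $g$ is $(Z,Z,Z)$, and $\Pi_S\ket{000}=0$. Removing such an obstruction means passing to the smaller $+$-signed operator $Pt$, where $-t\in S$ is the offending element. But $Pt=-P(-t)\in -PS$ lies in the \emph{antipodal} class. In particular, the partner $Q=-s\overline{Z}$ is pinned down by the sign requirement and cannot be re-chosen freely. So ``the same argument then applies to $g'$'' is unjustified.

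Stripped down, the lemma is equivalent to the following: there exist $+$-signed logicals $P,Q$ with $PQ\in -S$, neither of which dominates a $-$-signed element of $S$ in the above sense. Producing such a pair from the full-support hypothesis is the missing idea. Your averaging fallback only restates this condition and does not supply it.
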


The proof of this lemma is the main result in Campbell's paper used to prove the existence of bound magic states outside the stabilizer octahedron \cite{CampbellBrowne2010}. 

\subsection{Proof of Theorem \ref{thm:stab}}
Since every n-to-1 stabilizer reduction is completely described by a state, n, a stabilizer subgroup, and a Clifford decoding, take $\rho \in \mathcal{S}(\Q_1)$ (assuming the conditions on $B(\rho)$ and $F(\rho)$ given by Theorem \ref{thm:stab}), $n \in \N$ and $S \leq P_n$ to be any stabilizer subgroup. Note that we will use a particular choice of Clifford decoding (which will be constructed from $S$), but this is completely general since all such decoding maps differ by some $C \in \mathcal{C}_1$ on the first qubit and as we stated earlier, Cliffords preserve the L1 Bloch norm. 

Moreover, we will also assume that for all $i \in [n]$ there is some $s \in S$ such that the ith tensor factor of $s$ is not the identity. If this were not the case that the post-measurement state is simply some copies of $\rho$ padded with stabilizer states. Then, the decoding will simply output some Clifford rotation of $\rho$. 

Let $A_{\pm L} = (1 \pm F)B_L/2F$ for $L \in R$ and let $m = \min\{B_X,B_Y,B_Z\}/F$. Observe that we can write 
\begin{equation}
  I = \frac{B_X}{F} (\proj{+} + \proj{-}) + \frac{B_Y}{F} (\proj{+i} + \proj{-i}) + \frac{B_Z}{F} (\proj{0} + \proj{1})
\end{equation}
since each term is a resolution of the identity and all the coefficients sum to 1. Plugging this into definition \ref{def:bloch} and using the spectral decomposition of each Pauli matrix we get
\begin{equation}
  \rho = A_X \proj{+} + A_Y\proj{+i} + A_Z \proj{0} + A_{-X}\proj{-} + A_{-Y} \proj{-i} + A_{-Z} \proj{1}
\end{equation}
and so we can write
\begin{equation}
  \rho^{\otimes n} = \sum_{g \in R^{n}} q_g \proj{\psi_g} + \sum_{g \in R^{n}_\pm \setminus R^{n}} p_g \proj{\psi_g}
\end{equation}
where $q_g = p_g = \prod_{i \in [n]} A_{g_i}$. Note that for all coefficients, $q_g \geq (1 + F)^n m^n/2^n$ and $p_g \geq (1 + F)^{n - 1}(1 - F)/2^n$. It will be important later that $p_g$ is allowed to be negative while $q_g$ is not. 

We will now compute step (ii) of the stabilizer reduction. We define
\begin{equation}
  \ket{\psi_{g}'} = 
  \begin{cases}
    \frac{\Pi_S\ket{\psi_g}}{\sqrt{\bra{\psi_g}\Pi_S\ket{\psi_g}}} & \text{If $\Pi_S\ket{\psi_g} \neq 0$} \\ 
    0 & \text{Otherwise}
  \end{cases}, \qquad \text{and} \qquad q_{g}' = p_{g}' = \bra{\psi_{g}}\Pi_S\ket{\psi_g}q_{g}
  \label{eq:q'}
\end{equation}
Note that $q_{g}'$ and $p_{g}'$ are real with $0 \leq q_g' \leq q_g$ and $|p_g'| \leq |p_g|$. Then,
\begin{equation}
  \Pi_S\rho^{\otimes n}\Pi_S = \sum_{g \in R^n} q_g' \proj{\psi_g'} + \sum_{g \in R_{\pm}^n \setminus R^n} p_g' \proj{\psi_g'}
  \label{eq:post-proj}
\end{equation}
From equation \eqref{eq:q'}, using Lemma \ref{lemma:bound} and the earlier bound on each $q_g$, we have that if $q_g' \neq 0$ then 
\begin{equation}
  q_g' \geq \left(\frac{1 + F}{2}\right)^n\frac{m^n}{2^{3n - 2}}
\end{equation}
Now by Lemma \ref{lemma:logical} we have that each non-vanishing $\ket{\psi_{g}'}$ is stabilized by a group with independent generators $\{G_g, s_1, \dots, s_{n - 1}\}$ where the $s_i$ operators form a generating set for $S$ and $G_g \in \mathcal{G}_g$. Recall that $\mathcal{G}_g$ is the group stabilizing $\ket{\psi_g}$. Since $G_g$ commutes with all $S$, it is a logical operator for the code. We want to show that in particular, $G_{g}^2 = I$ and $G_g \notin \pm S$. To show the former, simply observe that $\mathcal{G}_g$ is a stabilizer subgroup and so none of its elements can have phase $\pm i$. To see the latter, $G_g \notin S$ since otherwise $\{G_g, s_1, \dots, s_{n - 1}\}$ is not independent and $G_g \notin -S$ because then by closure we have $-I$ in the stabilizer group of $\ket{\psi_{g}'}$ which is a contradiction. Therefore, we have that $G_g$ is in one of $\pm X_L,\pm Y_L,\pm Z_L$ where each of these are cosets of $S$ in $N(S)$. Note that by labeling the cosets we have implicitly chosen an isomorphism $N(S) / S \cong P_1$, but this is completely general since the choice of isomorphism simply swaps the labels between these cosets. 

By Lemma \ref{lemma:campbell} we have two non-vanishing $\ket{\psi_{g}'}$ and $\ket{\psi_{g'}'}$ such that the former is stabilized by $\overline{Z} \in \mathcal{G}_g$ and the latter by $-s\overline{Z} \in \mathcal{G}_{g'}$. By Lemma \ref{lemma:encoding_logical} there exists a Clifford decoding with respect to $\overline{Z}$ as its logical $Z$ operator and so we will use this one. 

Since we know that there are only 6 choices of $G_g$ up to stabilizer, we know that there are only 6 distinct states in equation \eqref{eq:post-proj}
and so we can collection them. 
\begin{equation}
  \Pi_S\rho^{\otimes n}\Pi_S = \sum_{L \in R_\pm} q_L \proj{\psi_L} + \sum_{L \in R_\pm} p_L \proj{\psi_L}
\end{equation}
Here, $\ket{\psi_L}$ the unique $+1$ logical eigenstate of logical operator $L$ and both $q_L$ and $p_L$ are summed over $q_{g}'$ and $p_{g}'$ with $G_g \in L$. 

\begin{lemma}
  For all subsets $A \subseteq R_\pm$ 
  \begin{equation}
    q_{\pm Z} + \sum_{L \in A} p_L > 0
  \end{equation}
  \label{lemma:sum}
\end{lemma}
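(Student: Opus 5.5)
Throughout, $q_{\pm Z}$ means the coefficient $q_L$ for whichever of $L = Z$ or $L = -Z$ the statement uses; Lemma \ref{lemma:campbell} guarantees both are nonzero. The plan is to bound $q_{\pm Z}$ from below and $\sum_{L \in A} p_L$ from below, then compare the two. The sum over $A$ is only dangerous through its negative part. Since each $p_L$ is a sum of $p_g'$ with $G_g \in L$, it suffices to prove
\begin{equation*}
  q_{\pm Z} > \sum_{g \in R^n_\pm \setminus R^n,\ p_g < 0} |p_g'| .
\end{equation*}
This gives the claim for every $A$ at once, because dropping nonnegative $p_g'$ only decreases the sum.

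\textbf{Step 1 (lower bound on $q_{\pm Z}$).} By Lemma \ref{lemma:campbell}, there is $g \in R^n$ with $\Pi_S\ket{\psi_g} \neq 0$ and $\overline{Z} \in \mathcal{G}_g$. There is also $g' \in R^n$ with $\Pi_S\ket{\psi_{g'}} \neq 0$ and $-s\overline{Z} \in \mathcal{G}_{g'}$. Hence $\ket{\psi_g'}$ lies in the coset $Z_L$ and $\ket{\psi_{g'}'}$ lies in $-Z_L$. Using the bound derived after \eqref{eq:q'}, which combines Lemma \ref{lemma:bound} with $q_g \geq (1+F)^n m^n/2^n$, together with the fact that all $q$'s are nonnegative, I get
\begin{equation*}
  q_{\pm Z} \geq \left(\frac{1+F}{2}\right)^n \frac{m^n}{2^{3n-2}} .
\end{equation*}

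\textbf{Step 2 (upper bound on the negative mass).} Each $p_g' = \bra{\psi_g}\Pi_S\ket{\psi_g}\, p_g$, so $|p_g'| \leq |p_g|$. The sign of $p_g = \prod_i A_{g_i}$ is negative only if an odd number of the $g_i$ carry a minus sign, since $A_{-L} = (1-F)B_L/2F < 0$ when $F > 1$. I bound the total by
\begin{equation*}
  \sum_{\text{odd}} |p_g| = \frac{1}{2}\left(\prod_i \sum_{L}\left(A_L + |A_{-L}|\right) - \prod_i \sum_L \left(A_L - |A_{-L}|\right)\right).
\end{equation*}
Here $\sum_L A_L = (1+F)/2$ and $\sum_L |A_{-L}| = (F-1)/2$. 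This gives $\tfrac12\left(F^n - 1\right)$, or a cruder version counting the at most $6^n$ terms $\leq 6^n \left((1+F)/2\right)^{n-1}(F-1)/2$. The crude $6^n$ count appears to match the $6^n$ in Theorem \ref{thm:stab}, so I would use the term-count bound: every $|p_g| \leq \left((1+F)/2\right)^{n-1}(F-1)/2$, and there are fewer than $6^n$ such $g$.

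\textbf{Step 3 (compare).} It remains to show
\begin{equation*}
  \left(\frac{1+F}{2}\right)^n \frac{m^n}{2^{3n-2}} > 6^n \left(\frac{1+F}{2}\right)^{n-1} \frac{F-1}{2} .
\end{equation*}
After simplification this reduces to an inequality of the form $(1+F)m^n > 6^n 2^{3n-2}(F-1)$, which rearranges to $F < \frac{6^n2^{3n-2} + m^n}{6^n2^{3n-2} - m^n}$. This is exactly the hypothesis of Theorem \ref{thm:stab}, so the lemma follows.

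The main obstacle is Step 2: matching the exponents of $(1+F)/2$ so that the rearrangement produces precisely the stated threshold. If the naive count leaves an extra power, I would instead use the sharper per-term bound $|p_g| \leq q_{\tilde g}\,(F-1)/(F+1)$ only for one flipped coordinate. I would check that the constant lands on $6^n 2^{3n-2}$.
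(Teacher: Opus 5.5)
Your proposal is correct and follows essentially the same route as the paper. You lower-bound $q_{\pm Z}$ via Lemma \ref{lemma:campbell} and Lemma \ref{lemma:bound}, lower-bound the negative $p$-mass by $6^n\left(\frac{1+F}{2}\right)^{n-1}\frac{F-1}{2}$, and rearrange to the threshold of Theorem \ref{thm:stab}; your Step 3 already confirms the exponents match (and correctly keeps the factor $6^n$ that the paper's displayed factorization drops), so the hedge in your final paragraph is unnecessary.
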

\begin{proof}
  Since $q_Z \geq q_g \neq 0$ and $q_{-Z} \geq q_{g'} \neq 0$ we have 
  \begin{equation}
    q_{\pm Z} \geq \left(\frac{1 + F}{2}\right)^n \frac{m^n}{2^{3n - 2}}
  \end{equation}
   Since there are at most $6^n$ terms in the second sum of equation \ref{eq:post-proj} and each $p_g \geq (1 + F)^{n - 1}(1 - F)/2^n$ we have 
  \begin{equation}
\sum_{L \in A} p_L \geq 6^n\left(\frac{1 + F}{2}\right)^{n - 1}\left(\frac{1 - F}{2}\right)
  \end{equation}
  and so 
  \begin{align}
    q_{\pm Z} + \sum_{L \in A} p_L &\geq 6^n\left(\frac{1 + F}{2}\right)^{n - 1}\left(\frac{1 - F}{2}\right) + \left(\frac{1 + F}{2}\right)^n\frac{m^n}{2^{3n - 2}}  \\
    &= \left(\frac{1 + F}{2}\right)^{n - 1}\left(\frac{1 - F}{2} + \frac{1 + F}{2}\frac{m^n}{2^{3n - 2}}\right)
  \end{align}
  Since $F < \frac{m^n + 6^n2^{3n - 2}}{6^n2^{3n - 2} - m^n}$ we get that the second term in the product is positive. The first term is clearly positive since $F \geq 0$ and so we get the desired result. 
\end{proof}

Finally, we want to renormalize the state in equation \eqref{eq:post-proj} and show the decoding is within the stabilizer octahedron. Let $c = \tr{\Pi_S\rho^{\otimes n}\Pi_S} = \sum q_L + p_L$. By lemma \ref{lemma:sum}, $c > 0$. Let $B'$ be the Bloch vector of the decoded state. Since the logical states are taken to their single qubit operator counterparts, we have that for $L \in R$ we can write
\begin{equation}
  B_{L}' = \frac{1}{c}((q_L + p_L) - (q_{-L} + p_{-L}))
\end{equation}
We want to show that 
\begin{equation}
  |B_{X}'| + |B_{Y}'| + |B_{Z}'| \leq 1
\end{equation}
which is equivalent to
\begin{equation}
  \sum_{L \in R} |(q_L + p_L) - (q_{-L} + p_{-L})| \leq c = \sum_{L \in R} q_L + p_L
\end{equation}
Define $M = \min\{q_Z + p_Z, q_{-Z} + p_{-Z}\}$. We can use the fact that $(q_{Z} + p_{Z}) \geq 0$ and $(q_{-Z} + p_{-Z}) \geq 0$ and the triangle inequality to write
\begin{equation}
  \sum_{L \in R} |(q_L + p_L) - (q_{-L} + p_{-L})| \leq -2M + p_Z + p_{-Z} + \sum_{L \in \pm\{Y, Z\}} |p_L| + \sum_{L \in R_\pm} q_L
\end{equation}
Let $A \subseteq \pm\{Y,Z\}$ label the subset of terms with $p_L < 0$. It then suffices to show that 
\begin{equation}
  -2M + \sum_{L \in A} |p_L|  = -2M + \sum_{L \in A} -p_L \leq \sum_{L \in A} p_L
\end{equation}
Now assume the contrary. Without loss of generality assume that $q_Z + p_Z \leq q_{-Z} + p_{-Z}$. 
\begin{align}
  -2q_{Z} - 2p_Z + \sum_{L \in A} -p_L &> \sum_{L \in A} p_L \\ 
  -2q_{Z} - 2p_Z  &> 2\sum_{L \in A} p_L \\ 
  -q_{Z} - p_Z  &> \sum_{L \in A} p_L \\ 
  0  &> q_{Z} + p_Z + \sum_{L \in A} p_L \\ 
\end{align}
the so we contradict lemma \ref{lemma:sum}. 

\section{Acknowledgements}

I wish to thank Prof. Tang for his guidance and help in composing this paper. I would also like to thank Thomas Bond and Scott McCary for their collaboration and Prof. Hoang for his guidance in doing mathematical research. 

\printbibliography{}
\onecolumn
\appendix

\section{Results on Stabilizer Groups}
\label{app:stab}
In this section we will prove some standard results about stabilizer states which are used throughout. 

\begin{lemma}
  Let $S \leq P_n$ be a stabilizer subgroup with independent generators $\{s_1, \dots, s_{n - k}\}$. Then the projection onto $\stab{(S)}$ is given by 
  \begin{equation}
    P_S = \frac{1}{2^{n - k}}\prod_{i \in [n - k]}(I + s_i) = \frac{1}{2^{n - k}}\sum_{s \in S} s
  \end{equation}
\end{lemma}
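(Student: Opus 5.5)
We need to prove that for a stabilizer subgroup $S$ with independent generators $\{s_1,\dots,s_{n-k}\}$, the orthogonal projection onto $\stab{(S)}$ is $P_S = \frac{1}{2^{n-k}}\prod_{i}(I+s_i) = \frac{1}{2^{n-k}}\sum_{s\in S}s$. The plan is to first establish the algebraic identity between the product and the sum, and then show that this operator is Hermitian, idempotent, and has image exactly $\stab{(S)}$.

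\emph{The identity.} The generators pairwise commute and are involutions (they are Hermitian Paulis with phase $\pm1$, since $-I\notin S$). Expanding the product therefore gives $\sum_{T\subseteq[n-k]}\prod_{i\in T}s_i$. Independence means the map $T\mapsto\prod_{i\in T}s_i$ is injective. Indeed, if two subsets gave the same product, their symmetric difference would multiply to $I$. Then some $s_j$ would be a product of the others, contradicting the independence condition $\langle G\setminus\{s_j\}\rangle\neq S$. Since the generators commute and square to $I$, every element of $S$ has the form $\prod_{i\in T}s_i$. So the map is a bijection onto $S$, and $|S|=2^{n-k}$, which gives the second equality.

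\emph{$P_S$ is an orthogonal projection.} Each $s\in S$ is Hermitian: it is a tensor product of Hermitian Paulis with a real phase $\pm1$. Hence $P_S$ is Hermitian. For idempotence, for any fixed $t\in S$ we have $tP_S = \frac{1}{|S|}\sum_s ts = P_S$, because $s\mapsto ts$ permutes $S$. Then $P_S^2 = \frac{1}{|S|}\sum_t tP_S = P_S$. A Hermitian idempotent is an orthogonal projection onto its image.

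\emph{The image is $\stab{(S)}$.} The relation $tP_S=P_S$ shows every vector in the image is fixed by every $t\in S$, so $\im P_S\subseteq\stab{(S)}$. Conversely, if $t\ket\psi=\ket\psi$ for all $t\in S$, then $P_S\ket\psi=\frac{1}{|S|}\sum_s\ket\psi=\ket\psi$. So $\stab{(S)}\subseteq\im P_S$, and the two coincide.

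The only delicate point is the bijection between subsets of generators and elements of $S$. This rests on interpreting the paper's independence definition correctly, and on using $-I\notin S$ to rule out phases that could make distinct subsets collide or produce non-Hermitian elements. If needed, this step can be phrased through the binary symplectic representation: independence corresponds to linear independence over $\mathbb{F}_2$, and $-I\notin S$ ensures the representation is faithful on $S$.
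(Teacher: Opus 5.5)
The paper states this lemma in the appendix with no proof. It treats the result as standard and uses it immediately in the dimension-counting lemma that follows. So there is no argument of the paper's to compare against, and your proposal is a correct and complete proof from the paper's own definitions.

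Your steps supply exactly what the paper later relies on: that $\frac{1}{2^{n-k}}\sum_{s\in S}s$ is an orthogonal projection, with $|S| = 2^{n-k}$.
\begin{itemize}
\item The bijection $T \mapsto \prod_{i\in T} s_i$ from subsets of $[n-k]$ onto $S$ holds. Injectivity follows from the paper's definition of independence, $\langle G\setminus\{s_j\}\rangle \neq S$. Surjectivity follows from commutativity and $s_i^2 = I$.
\item Hermiticity follows from every element having phase $\pm 1$.
\item The averaging identity $tP_S = P_S$ for $t \in S$ gives both idempotence and $\im P_S \subseteq \stab{(S)}$ at once.
\end{itemize}

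One small correction to your closing remark. In your own argument, distinct subsets cannot collide purely because of independence. The hypothesis $-I\notin S$ enters only by excluding phases $\pm i$, since an element $\pm iP$ would square to $-I$. That exclusion is what makes every element a Hermitian involution, and it lets you reduce arbitrary words in the generators to subset products, so $|S| = 2^{n-k}$. The appeal to the binary symplectic representation is therefore not needed.
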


\begin{lemma}
  Let $S \leq P_n$ be a rank $n - k$ stabilizer subgroup. Then, $\dim{(\stab{(S)})} = 2^k$.
\end{lemma}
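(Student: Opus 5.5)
The statement to prove is the last one in the appendix: if $S$ has rank $n-k$, then $\dim(\stab{(S)}) = 2^k$. I would take the preceding lemma as given, so the orthogonal projection onto $\stab{(S)}$ is
\begin{equation*}
  P_S = \frac{1}{2^{n-k}}\sum_{s\in S} s .
\end{equation*}
Since $P_S$ is an orthogonal projection, $\dim(\stab{(S)}) = \tr(P_S)$. The whole proof then reduces to computing this trace term by term, in three steps.

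First, I would establish that $|S| = 2^{n-k}$. Let $s_1,\dots,s_{n-k}$ be independent generators. Every element of $S$ commutes with every other and squares to $I$; the squaring claim uses the paper's remark that stabilizer groups contain no elements of phase $\pm i$. Hence each element of $S$ can be written as $s_1^{a_1}\cdots s_{n-k}^{a_{n-k}}$ with $a\in\{0,1\}^{n-k}$, and distinct exponent vectors must give distinct elements. Indeed, if two vectors gave the same element, multiplying the two products would give a nontrivial product of generators equal to $I$. Say $s_j$ appears in that product; then $s_j$ equals a product of the remaining generators, contradicting the independence condition $\langle G\setminus\{s_j\}\rangle \neq S$.

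Second, I would show that $\tr(s) = 0$ for every $s \in S$ with $s \neq I$. Such an $s$ has the form $\eta P_1\otimes\cdots\otimes P_n$ with at least one $P_i\in\{X,Y,Z\}$, since the only elements proportional to the identity are $\pm I$ and $-I\notin S$. The trace factorizes as $\eta\prod_i \tr(P_i)$, and any factor with $P_i \in R$ is traceless, so the product vanishes.

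Third, I would combine the two steps:
\begin{equation*}
  \tr(P_S) = \frac{1}{2^{n-k}}\tr(I) = \frac{2^n}{2^{n-k}} = 2^k .
\end{equation*}

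The only real obstacle is the counting step: the paper defines rank via a minimal generating set rather than linear independence, and the argument above converts minimality into distinctness of the $2^{n-k}$ products. Alternatively, one can pass to the binary symplectic representation, where independence is linear independence over $\mathbb{F}_2$ and the count is immediate.
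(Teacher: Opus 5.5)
Your proof is correct and follows the same route as the paper: you identify $\dim(\stab{(S)})$ with the trace of the projector $\frac{1}{2^{n-k}}\sum_{s\in S}s$, and every non-identity term vanishes because some tensor factor lies in $\{X,Y,Z\}$. Your counting step $|S| = 2^{n-k}$ is not actually used once that sum formula is taken as given, but it is a correct argument that justifies the normalization of the preceding lemma (and shows the rank is well defined), which the paper leaves implicit.
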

\begin{proof}
  The projector onto the simultaneous $+1$ eigenspace is
  \begin{equation}
    \Pi_S = \frac{1}{2^{n - k}}\sum_{s \in S} s
  \end{equation}
  Since $\Pi_S$ is an orthogonal projection it can only have eigenvalues $1$ or $0$. Let $a_0,a_1$ and $g_0,g_1$ be the algebraic and geometric multiplicities of $0$ and $1$ respectively. Since $\Pi_S$ is Hermitian, $a_0 = g_0$ and so $\dim{(\nullspace{\Pi_S})} = a_0$. The rank-nullity theorem then gives $a_0 + \rank{(\Pi_S)} = n$ and since $a_0 + a_1 = n$,
  \begin{equation}
    \rank{(\Pi_S)} = a_1 = \tr{(\Pi_S)}
  \end{equation}
  Expanding,
  \begin{equation}
    \tr{(\Pi_S)} = \frac{1}{2^{n - k}}\left(\tr{I} + \sum_{\substack{s \in S \\ s \neq I}} \tr{s}\right) = \frac{1}{2^{n - k}}\left(2^n + \sum_{\substack{s \in S \\ s \neq I}} \prod_{i \in [n]} \tr{s_i}\right) = \frac{2^n}{2^{n - k}} = 2^k
  \end{equation}
  where $s_i$ denotes the ith tensor factor of $s$. If $s \in S$ does not equal $I$ then $\exists i \in [n]$ such that $s_i \in \{X, Y, Z\}$ (since $\pm i I, -I \notin S$). Each of these are traceless and so $\tr{s_i} = 0$ and so the product vanishes. Finally,
  \begin{equation}
    \dim{(\im{\Pi_S}}) = \rank{(\Pi_S)} = 2^k
  \end{equation}
\end{proof}

\begin{lemma}
Let $\ket{\psi}$ be an n-qubit stabilizer state with stabilizer group $G$. Let $S$ be any stabilizer group. Then, $\Pi_S\ket{\psi} = 0$ if and only if there exists $s \in S$ such that $-s \in G$. 
\end{lemma}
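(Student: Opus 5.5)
We prove the two directions separately, using the decomposition $\Pi_S = \frac{1}{2^{r}}\prod_{i}(I + s_i) = \frac{1}{2^r}\sum_{s \in S} s$ from the first lemma of this appendix, where $\{s_1,\dots,s_r\}$ are independent generators of $S$. Throughout, $G$ is a rank $n$ stabilizer group, so $\ket\psi$ is the unique state it stabilizes and $|G| = 2^n$.

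For the ``if'' direction, suppose $s \in S$ with $-s \in G$. Since $\Pi_S$ projects onto the $+1$ eigenspace of every element of $S$, we have $s\Pi_S = \Pi_S$. Because $s$ commutes with all of $S$ it commutes with $\Pi_S$, so
\begin{equation}
  \Pi_S\ket\psi = s\Pi_S\ket\psi = \Pi_S s\ket\psi = -\Pi_S(-s)\ket\psi = -\Pi_S\ket\psi ,
\end{equation}
using $(-s)\ket\psi = \ket\psi$. Hence $\Pi_S\ket\psi = 0$.

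For the ``only if'' direction we argue by contrapositive: assume no $s \in S$ has $-s \in G$, and show $\Pi_S\ket\psi \neq 0$.
\begin{enumerate}[(i)]
  \item Write $\proj\psi = \frac{1}{2^n}\sum_{g \in G} g$, which is the same projector formula applied to $G$, whose stabilized space is one dimensional.
  \item Compute
  \begin{equation}
    \bra\psi\Pi_S\ket\psi = \tr(\Pi_S\proj\psi) = \frac{1}{2^{r+n}}\sum_{s\in S}\sum_{g\in G}\tr(sg).
  \end{equation}
  \item Evaluate each trace. For Paulis, $\tr(sg) = 0$ unless $sg \propto I$, that is, unless $g = \lambda s$ for a phase $\lambda$. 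Neither $S$ nor $G$ contains elements with phase $\pm i$, and $s^2 = I$, so the only possibilities are $g = s$, giving $\tr(sg) = 2^n$, or $g = -s$, giving $\tr(sg) = -2^n$.
  \item By hypothesis the case $g = -s$ never occurs. So every nonzero term is positive, and the term $s = g = I$ contributes $2^n$.
\end{enumerate}
It follows that $\bra\psi\Pi_S\ket\psi \geq 2^{-r} > 0$, and therefore $\Pi_S\ket\psi \neq 0$.

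The delicate step is (iii): it needs that $\tr(P) = 0$ for every Pauli $P$ not proportional to $I$, that $sg \propto I$ forces $g = \pm s$, and that the phases $\pm i$ are excluded. These follow from the tensor-product trace factorisation used in the preceding lemma together with the remark in the main text that stabilizer subgroups contain no elements of phase $\pm i$.

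This computation in fact gives the exact value
\begin{equation}
  \bra\psi\Pi_S\ket\psi = 2^{-r}\bigl(|S\cap G| - |S\cap -G|\bigr).
\end{equation}
Since $S\cap G$ is a subgroup of $S$ and, when $S \cap -G$ is nonempty, the set $S\cap -G$ is a coset of it of the same size, the value is $2^{-r}|S \cap G|$ when $S\cap -G = \emptyset$ and $0$ otherwise. This matches both directions above.
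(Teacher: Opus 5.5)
Your proof is correct and follows the paper's argument: the \emph{if} direction uses that $s$ acts as $-1$ on $\ket\psi$ to kill the projection, and the \emph{only if} direction expands $\tr(\Pi_S\proj\psi)=\tr(\Pi_S\Pi_G)$ into Pauli traces lying in $\{0,\pm 2^n\}$, where the identity term forces a positive value unless some $g=-s$ occurs (you argue by contrapositive, the paper by contradiction). Your version is slightly cleaner. Using $s\Pi_S=\Pi_S$ avoids the paper's tacit assumption that $s$ is one of the chosen generators of $S$, and your normalization $2^{-(r+n)}$ covers $S$ of arbitrary rank $r$ rather than the paper's implicit rank $n$.
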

\begin{proof}
  Let $S'$ be a set of $n$ independent generators for $S$. Assume that $\exists s \in S$ such that $-s \in G$. Then, 
  \begin{equation}
    s\ket{\psi} = s(-s)\ket{\psi} = -s^2\ket{\psi} = -\ket{\psi}
  \end{equation}
  and so $\ket{\psi}$ is a $-1$ eigenvector of $s$. Note that $s^2 = I$ because $s$ cannot have phase $\pm i$. Since the stabilizer group commutes,
  \begin{equation}
    \Pi_S = \frac{1}{2^n}\prod_{x \in S'}(I + x) = \frac{1}{2^n}\left(\prod_{\substack{x \in S' \\ x \neq s}} (I + x)\right) (I + s)
  \end{equation}
  and so 
  \begin{equation}
    \Pi_S\ket{\psi} = \frac{1}{2^n}\left(\prod_{\substack{x \in S' \\ x \neq s}} (I + x)\right)(I + s)\ket{\psi} = \frac{1}{2^n} \left(\prod_{\substack{x \in S' \\ x \neq s}} (I + x)\right)(\ket{\psi} - \ket{\psi}) = 0
  \end{equation}
  To prove the other direction, assume that $\Pi_S\ket{\psi} = 0$. Since $\ket{\psi}$ is a stabilizer state, $G$ stabilizes a 1-dimensional subspace and so for all $\ket\phi \in \C^{2^n}$, $\Pi_G\ket\phi = c\ket\psi$ for some $c \in \C$. Therefore, $\Pi_S\Pi_G\ket{\phi} = \Pi_Sc\ket\psi = 0$. Since this holds for all $\ket\phi$, $\Pi_S\Pi_G$ is the zero map and in particular $\tr{(\Pi_S\Pi_G)} = 0$. Expanding,
  \begin{equation}
    \tr{(\Pi_S\Pi_G)} = \tr{\left(\frac{1}{2^{2n}} \sum_{s,g \in S \times G} sg\right)} = \frac{1}{2^{2n}}\left(\tr{I} + \sum_{\substack{s,g \in S \times G \\ s,g \neq (I,I)}}{\tr{sg}}\right) = 0
    \label{eq:tmp}
  \end{equation}
  Note that since $\tr{I} = 2^n$ the summation term must be negative. For all $s,g \in S \times G$ we can write
  \begin{equation}
    s = \eta_sP_1 \otimes \dots \otimes P_n, \quad g = \eta_gQ_1 \otimes \dots \otimes Q_n, \qquad \eta_s,\eta_g \in \{-1, 1\}, P_i,Q_i \in \{X, Y, Z\}
  \end{equation}
  Therefore, $sg \neq \pm i I$. If $sg \neq \pm I$ then $\exists i \in [n]$ such that $P_iQ_i \in \{X, Y, Z\}$ and so $\tr{sg} = \tr{(P_iQ_i)}\prod_{j \neq i} \tr{P_jQ_j} = 0$. Therefore, $\tr{(sg)} \in \{-2^n,2^n, 0\}$. Since the summation term in equation \eqref{eq:tmp} must be negative there exists $s,g \in S \times G$ such that $sg = -I$. Observe that $sg = -I$ only if for all $i \in [n], P_i = Q_i$ and $\eta_s = -\eta_g$ and so $s = -g$. 
\end{proof}

\end{document}